\newtheorem{proposition}{Proposition}
\newtheorem{corollary}{Corollary}
\newtheorem{remark}{Remark}
\newcommand {\dfn} {:=}
\newcommand {\reals} {\ensuremath{\mathbb{R}}}
\newcommand {\bx} {\mbox{\boldmath $x$}}
\newcommand {\bE} {\ensuremath{\mathbb{E}}}
\newcommand{\calX}{{\cal X}}
\newcommand {\hP} {\hat{P}}
\newcommand {\hH} {\hat{H}}
\begin{document}
\thispagestyle{empty}
\setcounter{page}{1}
\setlength{\baselineskip}{1.15\baselineskip}

\title{\huge{An Integral Representation of the Logarithmic Function
with Applications in Information Theory\footnote{Published
in the Entropy journal, vol.~22, no.~1, paper~51, pp.~1--23,
January 2020.
\newline Available at: https://www.mdpi.com/1099-4300/22/1/51.}\\[0.8cm]}}

\author{Neri Merhav \qquad Igal Sason\\[0.3cm]
The Andrew and Erna Viterbi Faculty of Electrical Engineering\\
Technion -- Israel Institute of Technology \\
Technion City, Haifa 3200003, Israel\\
E-mail: \{merhav, sason\}@ee.technion.ac.il}

\maketitle
\thispagestyle{empty}

\begin{abstract}
We explore a well--known integral representation of the logarithmic function,
and demonstrate its usefulness in obtaining compact, easily--computable exact
formulas for quantities that involve expectations and higher moments of the
logarithm of a positive random variable (or the logarithm of a sum of i.i.d.
positive random variables).
The integral representation of the logarithm is proved useful in a
variety of information--theoretic applications, including universal lossless
data compression, entropy and differential entropy evaluations, and the calculation
of the ergodic capacity of the single-input, multiple--output (SIMO) Gaussian
channel with random parameters (known to both transmitter and receiver).
This integral representation and its variants are anticipated to serve as
a useful tool in additional applications, as a rigorous alternative
to the popular (but non--rigorous) replica method (at least in some situations).\\

{\bf Index Terms:} Integral representation, logarithmic expectation, universal
data compression, entropy, differential entropy, ergodic capacity, SIMO channel,
multivariate Cauchy distribution.
\end{abstract}

\break

\section{Introduction}

In analytic derivations pertaining to many problem areas in information
theory, one frequently encounters the need to calculate expectations and
higher moments of expressions that involve the logarithm of a
positive--valued random variable, or more generally, the logarithm of the sum
of several i.i.d. positive random variables. The common practice, in such situations, is
either to resort to upper and lower bounds on the desired expression (e.g., using Jensen's
inequality or any other well--known inequalities), or to apply the Taylor series
expansion of the logarithmic function. A more modern approach is to use the replica
method (see, e.g., \cite[Chap.\ 8]{MM09}), which is a popular (but non--rigorous) tool
that has been borrowed from the field of statistical physics with considerable success.

The purpose of this work is to point out to an alternative approach and to
demonstrate its usefulness in some frequently--encountered
situations. In particular, we consider the following
integral representation of the logarithmic function (to be proved in the
sequel),
\begin{align}
\label{ir}
\ln x =\int_0^\infty\frac{e^{-u}-e^{-ux}}{u} \; \mathrm{d}u, \quad x > 0.
\end{align}
The immediate use of this representation is in situations where the argument
of the logarithmic function is a positive--valued random variable, $X$, and we
wish to calculate the expectation, $\bE\{\ln X\}$.
By commuting the expectation operator with the integration over $u$ (assuming
that this commutation is valid), the calculation of $\bE\{\ln X\}$ is
replaced by the calculation of the (often easier) moment--generating function
(MGF) of $X$, as
\begin{align}
\label{ElnX}
\bE\{\ln X\}=
\int_0^\infty\left[e^{-u}-\bE\{e^{-uX}\}\right] \; \frac{\mathrm{d}u}{u}.
\end{align}
Moreover, if $X_1,\ldots,X_n$ are positive i.i.d. random variables, then
\begin{align}
\label{Elnsum}
\bE\{\ln(X_1+\ldots+X_n)\}=\int_0^\infty
\left(e^{-u}-\bigl[\bE\{e^{-uX_1}\}\bigr]^n\right) \; \frac{\mathrm{d}u}{u}.
\end{align}
This simple idea is not quite new. It has been used in the physics
literature, see, e.g., \cite[Exercise~7.6, p.~140]{MM09},
\cite[Eq.~(2.4) and onward]{EN93} and \cite[Eq.~(12) and onward]{SSRCM19}.
With the exception of \cite{RajanT15}, we are not aware of any work in the
information theory literature where it has been used. The purpose of this
paper is to demonstrate additional information-theoretic applications, as
the need to evaluate logarithmic expectations is not rare at all in many
problem areas of information theory. Moreover, the integral representation
\eqref{ir} is useful also for evaluating higher moments of $\ln X$, most
notably, the second moment or variance, in order to assess the statistical
fluctuations around the mean.

We demonstrate the usefulness of this approach in several application
areas, including: entropy and differential entropy evaluations, performance
analysis of universal lossless source codes, and calculations of the ergodic
capacity of the Rayleigh single-input multiple-output (SIMO) channel. In some
of these examples, we also demonstrate the calculation of variances associated
with the relevant random variables of interest. As a side remark, in the same
spirit of introducing integral representations and applying them, Simon and
Divsalar \cite{S98, SD98} have brought to the attention of communication
theorists useful, definite--integral forms of the $Q$--function (Craig's
formula \cite{Craig91}) and Marcum $Q$--function, and demonstrated
their utility in applications.

It should be pointed out that most of our results remain in the form of a
single-- or double-- definite integral of certain functions that depend on the
parameters of the problem in question. Strictly speaking, such a definite
integral may not be considered a closed--form expression, but nevertheless,
we can say the following:
\begin{enumerate}[a)]
\item In most of our examples, the expression we obtain is more compact, more
elegant, and often more insightful than the original quantity.
\item The resulting definite integral can actually be considered a closed--form
expression ``for every practical purpose'' since definite integrals in one or
two dimensions can be calculated instantly using built-in numerical integration
operations in MATLAB, Maple, Mathematica, or other mathematical software tools.
This is largely similar to the case of expressions that include
standard functions (e.g., trigonometric, logarithmic, exponential functions, etc.),
which are commonly considered to be closed--form expressions.
\item The integrals can also be evaluated by power series expansions of the integrand,
followed by term--by--term integration.
\item Owing to Item~(c), the asymptotic behavior in the parameters of the model can
be evaluated.
\item At least in two of our examples, we show how to pass from an $n$--dimensional
integral (with an arbitrarily large $n$) to one or two--dimensional integrals.
This passage is in the spirit of the transition from a multi--letter expression to
a single--letter expression.
\end{enumerate}

To give some preliminary flavor of our message in this work, we conclude
this introduction by mentioning a possible use of the integral representation
in the context of calculating the entropy of a Poissonian
random variable. For a Poissonian random variable, $N$, with parameter
$\lambda$, the entropy (in nats) is given by
\begin{align}
H(\lambda)=-\bE\biggl\{\ln\biggl(\frac{e^{-\lambda}\lambda^N}{N!}\biggr)\biggr\}
=\lambda-\lambda\ln\lambda+\bE\{\ln N!\},
\end{align}
where the non--trivial part of the calculation is associated with the last
term, $\bE\{\ln N!\}$. In \cite{EvansB88}, this term was handled by using a
non--trivial formula due to Malmst\'en (see \cite[pp.~20--21]{EMOT81}), which
represents the logarithm of Euler's Gamma function in an integral form
(see also \cite{Martinez07}). In Section~\ref{mathbg}, we derive the
relevant quantity using \eqref{ir}, in a simpler and more transparent form
which is similar to \cite[(2.3)--(2.4)]{Knessl98}.

The outline of the remaining part of this paper is as follows.
In Section~\ref{mathbg}, we provide some basic mathematical background
concerning the integral representation \eqref{ElnX} and some of its
variants. In Section~\ref{appl}, we present the application examples.
Finally, in Section~\ref{outlook}, we summarize and provide some outlook.

\section{Mathematical Background}
\label{mathbg}

In this section, we present the main mathematical background associated with
the integral representation \eqref{ir}, and provide several variants of this
relation, most of which are later used in this paper. For reasons that will
become apparent shortly, we extend the scope to the complex plane.

\begin{proposition}
\label{Proposition 1}
\begin{align} \label{log int. rep.}
\ln z = \int_0^\infty \frac{e^{-u} - e^{-uz}}{u} \; \mathrm{d}u,
\quad \mathrm{Re}(z) \geq 0.
\end{align}
\end{proposition}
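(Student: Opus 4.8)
The plan is to prove \eqref{log int. rep.} by showing that the right--hand side, viewed as a function $F(z)$ of the parameter $z$, is holomorphic on the open right half--plane, has derivative $1/z$ there, and agrees with $\ln z$ at the single point $z=1$; a separate limiting argument then carries the identity to the boundary $\mathrm{Re}(z)=0$. (This is, in effect, the special case $f(u)=e^{-u}$ of the Frullani integral.) First I would restrict to $\mathrm{Re}(z)>0$ and set $F(z):=\int_0^\infty \frac{e^{-u}-e^{-uz}}{u}\,\mathrm{d}u$, then check convergence. Near $u=0$ a Taylor expansion gives $e^{-u}-e^{-uz}=(z-1)u+O(u^2)$, so the integrand extends continuously with value $z-1$ and the origin is harmless; near $u=\infty$ the bound $|e^{-u}-e^{-uz}|/u\le (e^{-u}+e^{-u\,\mathrm{Re}(z)})/u$ shows the integral converges absolutely whenever $\mathrm{Re}(z)>0$.

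The next step is to differentiate under the integral sign. Since $\partial_z\bigl[(e^{-u}-e^{-uz})/u\bigr]=e^{-uz}$, one expects $F'(z)=\int_0^\infty e^{-uz}\,\mathrm{d}u=1/z$ on the right half--plane. Granting this, $F(z)-\ln z$ has vanishing derivative on a connected domain, hence is constant; evaluating at $z=1$ gives $F(1)=0=\ln 1$, so the constant is zero and $F(z)=\ln z$ for all $\mathrm{Re}(z)>0$. An equivalent route avoids differentiation altogether: writing $(e^{-u}-e^{-uz})/u=\int_1^z e^{-ut}\,\mathrm{d}t$ and interchanging the order of integration yields $\int_1^z \frac{\mathrm{d}t}{t}=\ln z$, where for real $z>1$ the interchange is legitimate by Tonelli's theorem and the complex case follows by analytic continuation.

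The main obstacle is rigor at the two interchange points. Both the holomorphy of $F$ and the formula $F'(z)=1/z$ (or, alternatively, the Fubini step) require an integrable majorant that is uniform on compact subsets of $\{\mathrm{Re}(z)>0\}$; constructing this from the two bounds above is routine but must be done. The genuinely delicate part is the boundary $\mathrm{Re}(z)=0$ with $z\ne 0$, where the integral is only conditionally convergent because its tail behaves like the Dirichlet--type integral $\int \frac{e^{-iu\,\mathrm{Im}(z)}}{u}\,\mathrm{d}u$. There I would not differentiate, but instead let $z=\varepsilon+iy\to iy$ and use an Abel--type/equicontinuity estimate on the oscillatory tail to pass $F(z)=\ln z$ to the limit; the resulting real and imaginary parts reproduce the principal value of $\ln(iy)$ via the classical evaluations $\int_0^\infty \frac{e^{-u}-\cos(uy)}{u}\,\mathrm{d}u=\ln|y|$ and $\int_0^\infty \frac{\sin(uy)}{u}\,\mathrm{d}u=\tfrac{\pi}{2}\,\mathrm{sgn}(y)$.
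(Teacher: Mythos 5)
Your proposal is correct, but your primary route is genuinely different from the paper's. The paper proves the identity by a direct computation: it writes $\ln z=(z-1)\int_0^1\frac{\mathrm{d}v}{1+v(z-1)}$, inserts the Laplace representation $\frac{1}{s}=\int_0^\infty e^{-su}\,\mathrm{d}u$ (valid since $\mathrm{Re}\{1+v(z-1)\}>0$ for $v\in(0,1)$ when $\mathrm{Re}(z)\ge 0$), and swaps the order of integration; after the substitution $t=1+v(z-1)$ this is precisely your ``equivalent route'' $(e^{-u}-e^{-uz})/u=\int_1^z e^{-ut}\,\mathrm{d}t$ followed by Fubini, so your secondary argument is the paper's proof in disguise. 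Your main argument instead characterizes $F(z)$ as the holomorphic function on $\{\mathrm{Re}(z)>0\}$ with $F'(z)=1/z$ and $F(1)=0$; this costs some complex-analytic machinery (holomorphy of parameter integrals with a locally uniform integrable majorant, uniqueness of antiderivatives on a connected domain) where the paper needs only a Fubini--Tonelli step, but it is robust and adapts easily to related representations. A genuine point in your favor is the boundary $\mathrm{Re}(z)=0$: there the paper's interchange is \emph{not} backed by absolute convergence --- with $z=iy$ one has $\int_0^1\int_0^\infty \bigl|e^{-u[1+v(z-1)]}\bigr|\,\mathrm{d}u\,\mathrm{d}v=\int_0^1\frac{\mathrm{d}v}{1-v}=\infty$ --- so Tonelli fails and the swap would itself need a truncation argument with a uniform estimate on the oscillatory tail, which is exactly the Abel/Dirichlet-type limit $z=\varepsilon+iy\to iy$ that you spell out; your consistency check via $\int_0^\infty\frac{e^{-u}-\cos(uy)}{u}\,\mathrm{d}u=\ln|y|$ and $\int_0^\infty\frac{\sin(uy)}{u}\,\mathrm{d}u=\frac{\pi}{2}\,\mathrm{sgn}(y)$ correctly reproduces the principal branch $\ln(iy)=\ln|y|+i\,\frac{\pi}{2}\,\mathrm{sgn}(y)$ (and matches the paper's Corollary~1). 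In this respect your treatment of the purely imaginary case is actually more careful than the paper's own proof, which passes over the merely conditional convergence on the imaginary axis in silence.
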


\begin{proof}
\begin{align}
\ln z &= (z-1) \int_0^1 \frac{\mathrm{d}v}{1+v(z-1)}\\
\label{int_a2311}
&= (z-1) \, \int_0^1 \int_0^\infty e^{-u[1+v(z-1)]} \, \mathrm{d}u \, \mathrm{d}v\\
\label{int_0212}
&= (z-1) \, \int_0^\infty e^{-u} \, \int_0^1 e^{-uv(z-1)} \, \mathrm{d}v \, \mathrm{d}u \\
&= \int_0^\infty \frac{e^{-u}}{u} \, \Bigl[1 - e^{-u(z-1)} \Bigr] \, \mathrm{d}u \\
&= \int_0^\infty \frac{e^{-u} - e^{-uz}}{u} \; \mathrm{d}u,
\end{align}
where \eqref{int_a2311} holds since $\mathrm{Re}\{1+v(z-1)\} > 0$ for
all $v\in(0,1)$, based on the assumption that $\mathrm{Re}(z) \geq 0$;
\eqref{int_0212} holds by switching the order of integration.
\end{proof}

\begin{remark}
\label{Remark 1}
In \cite[p.~363, Identity~(3.434.2)]{GR14}, it is stated that
\begin{align}
\label{eq:GR}
\int_0^{\infty} \frac{e^{-\mu x} - e^{-\nu x}}{x} \, \mathrm{d}x = \ln \frac{\nu}{\mu},
\qquad \mathrm{Re}(\mu) > 0, \; \mathrm{Re}(\nu) > 0.
\end{align}
Proposition~\ref{Proposition 1} also applies to any purely imaginary number, $z$,
which is of interest too (see Corollary~\ref{Corollary 1} in the sequel, and the
identity with the characteristic function in \eqref{011219}).
\end{remark}

\vspace*{0.1cm}
Proposition~\ref{Proposition 1} paves the way to obtaining some additional related integral
representations of the logarithmic function for the reals.

\begin{corollary}(\cite[p.~451, Identity~3.784.1]{GR14})
\label{Corollary 1}
For every $x>0$,
\begin{align} \label{eq: cor. 1}
\ln x = \int_0^{\infty} \frac{\cos(u) - \cos(ux)}{u} \; \mathrm{d}u.
\end{align}
\end{corollary}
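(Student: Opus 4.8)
The plan is to derive Corollary~\ref{Corollary 1} from Proposition~\ref{Proposition 1} by exploiting the fact that \eqref{log int. rep.} is valid on the closed right half-plane $\mathrm{Re}(z)\ge 0$, and in particular on the imaginary axis. Fix $x>0$ and substitute $z = \mathrm{i}x$ into \eqref{log int. rep.}. The right-hand side becomes $\int_0^\infty \frac{e^{-u} - e^{-\mathrm{i}ux}}{u}\,\mathrm{d}u$, and taking real parts of both sides (using $e^{-\mathrm{i}ux} = \cos(ux) - \mathrm{i}\sin(ux)$) yields exactly the real-valued integrand $\frac{\cos(u)-\cos(ux)}{u}$ that appears in \eqref{eq: cor. 1}. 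On the left-hand side, I would use the principal branch of the complex logarithm, $\ln(\mathrm{i}x) = \ln x + \mathrm{i}\,\tfrac{\pi}{2}$ for $x>0$, whose real part is precisely $\ln x$. Matching real parts then gives the claimed identity.

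First I would state the substitution $z=\mathrm{i}x$ explicitly and invoke Proposition~\ref{Proposition 1}, which applies since $\mathrm{Re}(\mathrm{i}x)=0\ge 0$. Second, I would expand $e^{-\mathrm{i}ux}$ via Euler's formula and split the integrand into its real and imaginary parts. Third, I would compute $\mathrm{Re}\{\ln(\mathrm{i}x)\}=\ln x$ on the left. The final step is simply to equate the real parts of the two sides, discarding the imaginary part (which would separately yield an identity for $\tfrac{\pi}{2}$ in terms of $\int_0^\infty \frac{\sin(ux)}{u}\,\mathrm{d}u$, a known Dirichlet integral that serves as a consistency check but is not needed here).

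The main obstacle, and the point requiring the most care, is the convergence and the legitimacy of taking real parts under the integral sign. The integral in \eqref{log int. rep.} with $z$ purely imaginary is only conditionally (not absolutely) convergent, since near $u=\infty$ the terms $\frac{\cos(u)}{u}$ and $\frac{\cos(ux)}{u}$ decay only like $1/u$ without the exponential damping present when $\mathrm{Re}(z)>0$. Thus I cannot simply split $\int_0^\infty \frac{e^{-u}-e^{-\mathrm{i}ux}}{u}\,\mathrm{d}u$ into $\int_0^\infty \frac{e^{-u}}{u}\,\mathrm{d}u$ and $\int_0^\infty \frac{e^{-\mathrm{i}ux}}{u}\,\mathrm{d}u$, as each piece diverges at the origin and the latter only converges conditionally at infinity. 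The cleanest way to handle this rigorously is to interpret \eqref{log int. rep.} at $z=\mathrm{i}x$ as a limit from within the open half-plane, writing $z = \epsilon + \mathrm{i}x$ with $\epsilon\downarrow 0$, invoking continuity of both sides (the left side is continuous by continuity of $\ln$, and the right side by a dominated-convergence or Abel-type argument that controls the conditionally convergent tail uniformly in $\epsilon$), and only then taking real parts of the resulting real-axis identity. Since $\frac{e^{-u}-e^{-\mathrm{i}ux}}{u}\to \frac{(1-\cos(ux))+\mathrm{i}\sin(ux)}{u}$ has real part $\frac{1-\cos(ux)}{u}$, one should be slightly careful that combining $e^{-u}$ with this does not accidentally change the integrand; a quick check confirms $\mathrm{Re}\{e^{-u}-e^{-\mathrm{i}ux}\}/u = (e^{-u}-\cos(ux))/u$, which differs from the target $(\cos u - \cos(ux))/u$ by $(e^{-u}-\cos u)/u$, and I would verify that $\int_0^\infty \frac{e^{-u}-\cos u}{u}\,\mathrm{d}u = 0$ (a Frullani-type vanishing, consistent with \eqref{eq:GR} and the $x=1$ case). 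This last reconciliation is the one routine computation I would actually carry out to close the argument.
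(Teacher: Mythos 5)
Your proposal is correct and takes essentially the same route as the paper: apply Proposition~\ref{Proposition 1} at $z=ix$, use $\mathrm{Re}\{\ln(ix)\}=\ln x$ and take real parts to obtain $\ln x = \int_0^\infty \frac{e^{-u}-\cos(ux)}{u}\,\mathrm{d}u$, then subtract the $x=1$ instance of this identity (which equals zero) to replace $e^{-u}$ by $\cos u$ --- exactly the reconciliation step you carry out at the end. Your added discussion of the conditional convergence on the imaginary axis and the limit $z=\epsilon+ix$, $\epsilon\downarrow 0$, supplies rigor that the paper's one-line proof leaves implicit, but it is the same argument.
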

\begin{proof}
By Proposition~\ref{Proposition 1} and the identity
$\ln x \equiv \mathrm{Re}\bigl\{\ln(ix)\bigr\}$ (with $i\dfn\sqrt{-1}$), we get
\begin{align}
\label{0112a}
\ln x &= \int_0^\infty  \frac{e^{-u} - \cos(ux)}{u} \; \mathrm{d}u.
\end{align}
Subtracting both sides by the integral in \eqref{0112a}
for $x=1$ (which is equal to zero) gives~\eqref{eq: cor. 1}.
\end{proof}

\vspace{0.2cm}
Let $X$ be a real-valued random variable, and let
$\Phi_X(\nu)\dfn\bE\bigl\{e^{i\nu X}\bigr\}$ be the characteristic function of $X$.
Then, by Corollary~\ref{Corollary 1},
\begin{align} \label{011219}
\bE\bigl\{\ln X\bigr\} = \int_0^\infty \frac{\cos(u)-
\mathrm{Re}\{\Phi_X(u)\}}{u} \; \mathrm{d}u,
\end{align}
where we are assuming, here and throughout the sequel, that the expectation operation
and the integration over $u$ are commutable, i.e., Fubini's theorem applies.

Similarly, by returning to Proposition~\ref{Proposition 1} (confined to a real--valued
argument of the logarithm), the calculation of $\bE\{\ln X\}$ can be replaced by the
calculation of the MGF of $X$, as
\begin{align}  \label{1}
\bE\{\ln X\}=
\int_0^\infty \Bigl[e^{-u}-\bE\bigl\{e^{-uX}\bigr\}\Bigr] \; \frac{\mathrm{d}u}{u}.
\end{align}
In particular, if $X_1,\ldots,X_n$ are positive i.i.d. random variables, then
\begin{align}
\label{1b}
\bE\{\ln(X_1+\ldots+X_n)\} = \int_0^\infty \Bigl[e^{-u}-\bigl(\bE\bigl\{e^{-uX_1}\bigr\}\bigr)^n \Bigr] \;
\frac{\mathrm{d}u}{u}.
\end{align}

\begin{remark}
\label{Remark 2}
One may further manipulate \eqref{1} and \eqref{1b} as follows.
Since $\ln x \equiv \frac1s \, \ln (x^s)$ for any $s\ne 0$ and $x > 0$, then
the expectation of $\ln X$ can also be represented as
\begin{align}  \label{1-s}
\bE\{\ln X\}
= \frac1s \int_0^\infty \Bigl[e^{-u}-\bE\bigl\{e^{-uX^s}\bigr\} \Bigr] \;
\frac{\mathrm{d}u}{u}, \quad s \neq 0.
\end{align}
The idea is that if, for some $s \notin \{0,1\}$, $\bE\{e^{-uX^s}\}$ can be expressed
in closed form, whereas it cannot for $s=1$ (or even $\bE\{e^{-uX^s}\} < \infty$
for some $s \notin \{0,1\}$, but not for $s=1$), then \eqref{1-s} may prove useful.
Furthermore, if $X_1, \ldots, X_n$ are positive i.i.d. random variables,
$s> 0$, and
$Y = (X_1^s + \ldots + X_n^s)^{1/s}$, then
\begin{align}  \label{1b-s}
\bE\{\ln Y\}= \frac1s \int_0^\infty
\Bigl[e^{-u}- \bigl(\bE\bigl\{e^{-uX_1^s}\bigr\}\bigr)^n\Bigr] \; \frac{\mathrm{d}u}{u}.
\end{align}
For example, if $\{X_i\}$ are i.i.d. standard Gaussian random variables and
$s=2$, then \eqref{1b-s} enables to calculate the expected value of the
logarithm of a chi-squared distributed random variable with $n$ degrees of freedom.
In this case,
\begin{align}
\bE\{e^{-uX_1^2}\} &= \frac1{\sqrt{2\pi}} \int_{-\infty}^{\infty} e^{-ux^2} \,
e^{-x^2/2} \, \mathrm{d}x  \nonumber \\[0.1cm]
&= \frac1{\sqrt{2u+1}},
\end{align}
and, from \eqref{1b-s} with $s=2$,
\begin{align}
\bE\{\ln Y\}= \tfrac12 \int_0^\infty \left[e^{-u}-
(2u+1)^{-n/2}\right] \; \frac{\mathrm{d}u}{u}.
\end{align}
It should be noted that according to the {\em pdf}
of a chi-squared distribution, one can express $\bE\{\ln Y\}$ as a
one-dimensional integral even without using \eqref{1b-s}.
However, for general $s > 0$, the direct calculation of
$\bE\bigl\{\ln \left(\sum_{i=1}^n |X_i|^s \right) \bigr\}$
leads to an $n$-dimensional integral, whereas \eqref{1b-s}
provides a one-dimensional integral whose integrand involves
in turn the calculation of a one-dimensional integral too.
\end{remark}

Identity \eqref{ir} also proves useful when one is interested, not only in
the expected value of $\ln X$, but also
in higher moments, in particular, its second moment or variance.
In this case, the one--dimensional integral becomes a
two--dimensional one. Specifically, for any $s>0$,
\begin{align}
\mathrm{Var}\{\ln X\}
&=\bE\{\ln^2(X)\}-[\bE\{\ln X\}]^2\\[0.1cm]
&=\frac1{s^2} \, \bE\left\{\int_0^\infty \int_0^\infty
\bigl(e^{-u}-e^{-uX^s}\bigr) \, \bigl(e^{-v}-e^{-vX^s}\bigr)
\; \frac{\mathrm{d}u \, \mathrm{d}v}{uv}\right\} \nonumber\\[0.1cm]
& \hspace*{0.4cm} -\frac1{s^2} \, \int_0^\infty \int_0^\infty
\bigl( e^{-u}-\bE\{e^{-uX^s}\} \bigr) \,
\bigl(e^{-v}-\bE\{e^{-vX^s}\}\bigr)
\; \frac{\mathrm{d}u \, \mathrm{d}v}{uv}\\[0.1cm]
&= \frac1{s^2} \int_0^\infty \int_0^\infty
\left[\bE\bigl\{e^{-(u+v)X^s}\bigr\}
-\bE\bigl\{e^{-uX^s}\bigr\} \,
\bE\bigl\{e^{-vX^s}\bigr\} \right]
\; \frac{\mathrm{d}u \, \mathrm{d}v}{uv}\\[0.1cm]
&=\frac1{s^2} \int_0^\infty \int_0^\infty
\mathrm{Cov}\bigl\{e^{-uX^s},e^{-vX^s}\bigr\}
\; \frac{\mathrm{d}u \, \mathrm{d}v}{uv}.
\end{align}
More generally, for a pair of positive random variables,
$(X,Y)$, and for $s>0$,
\begin{align}
\mathrm{Cov}\{\ln X,\ln Y\}
= \frac1{s^2} \int_0^\infty\int_0^\infty\mathrm{Cov}\bigl\{e^{-uX^s},e^{-vY^s}\bigr\}
\; \frac{\mathrm{d}u \, \mathrm{d}v}{uv}.
\end{align}

For later use, we present the following variation of the basic identity.
\begin{proposition}
\label{Proposition 2}
Let $X$ be a random variable, and let
\begin{align}
\label{g}
M_X(s)\dfn \bE\bigl\{e^{sX}\bigr\}, \quad \forall \, s \in \reals,
\end{align}
be the MGF of $X$. If $X$ is non-negative, then
\begin{align} \label{2511a}
& \bE\bigl\{ \ln(1+X)\bigr\} = \int_0^\infty \frac{e^{-u} \, [1 - M_X(-u)]}{u}
\; \mathrm{d}u,
\end{align}
and
\begin{align} \label{2511b}
& \mathrm{Var}\bigl\{ \ln(1+X) \bigr\} = \int_0^\infty \int_0^\infty
\frac{e^{-(u+v)}}{uv} \;
\Bigl[ M_X(-u-v) - M_X(-u) \, M_X(-v) \Bigr] \, \mathrm{d}u \, \mathrm{d}v.
\end{align}
\end{proposition}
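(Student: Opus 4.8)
The plan is to obtain both identities as direct specializations of the representations already established, applied not to $X$ itself but to the shifted random variable $1+X$. The non-negativity of $X$ is exactly what makes this legitimate: it guarantees $1+X \geq 1 > 0$, so that $\ln(1+X)$ is well defined and the basic representation \eqref{1} (together with its second-moment counterpart) may be invoked with $1+X$ in the role of the positive random variable. The sole algebraic input I will need is the elementary factorization $\bE\{e^{-u(1+X)}\} = e^{-u} \, \bE\{e^{-uX}\} = e^{-u} \, M_X(-u)$, which rewrites the MGF of $1+X$ in terms of $M_X$.

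For the mean, I would substitute $1+X$ into \eqref{1} in place of the generic positive variable, giving $\bE\{\ln(1+X)\} = \int_0^\infty [e^{-u} - \bE\{e^{-u(1+X)}\}] \, \frac{\mathrm{d}u}{u}$. Replacing $\bE\{e^{-u(1+X)}\}$ by $e^{-u} M_X(-u)$ and pulling out the common factor $e^{-u}$ immediately yields \eqref{2511a}.

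For the variance, I would use the covariance identity with $s=1$ applied to $1+X$, that is, $\mathrm{Var}\{\ln(1+X)\} = \int_0^\infty \int_0^\infty \mathrm{Cov}\{e^{-u(1+X)}, e^{-v(1+X)}\} \, \frac{\mathrm{d}u \, \mathrm{d}v}{uv}$. Expanding the covariance as $\bE\{e^{-(u+v)(1+X)}\} - \bE\{e^{-u(1+X)}\} \, \bE\{e^{-v(1+X)}\}$ and substituting $\bE\{e^{-w(1+X)}\} = e^{-w} M_X(-w)$ for each argument ($w = u+v$, then $w=u$ and $w=v$) produces the common factor $e^{-(u+v)}$ in both terms; extracting it leaves the bracket $M_X(-u-v) - M_X(-u) M_X(-v)$, which is precisely \eqref{2511b}.

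The only substantive issue, exactly as in the earlier derivations, is not the algebra but the justification of commuting the expectation with the single and double integrations over $u$ (and $v$), i.e., verifying the hypotheses of Fubini's theorem and the convergence of the integrals. Here non-negativity of $X$ is again helpful: for $u>0$ one has $0 \leq M_X(-u) = \bE\{e^{-uX}\} \leq 1$, so the integrand in \eqref{2511a} is non-negative, and near $u=0$ the difference $1 - M_X(-u)$ vanishes linearly (behaving like $u \, \bE\{X\}$), which cancels the $1/u$ singularity, while the $e^{-u}$ factor controls the tail; analogously, the bounds $|M_X(-u-v)| \leq 1$ and $|M_X(-u) M_X(-v)| \leq 1$ together with the $e^{-(u+v)}$ weight govern the double integral. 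I expect this integrability and Fubini verification to be the main (indeed the only) obstacle, the remainder being a routine specialization of the identities already proved.
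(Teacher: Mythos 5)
Your proposal is correct and follows essentially the same route as the paper: identity \eqref{2511a} is obtained exactly as the paper does (the paper calls it ``a trivial consequence'' of \eqref{1}, applied to $1+X$ with the factorization $\bE\{e^{-u(1+X)}\}=e^{-u}M_X(-u)$), and for \eqref{2511b} you invoke the covariance-of-exponentials identity from Section~\ref{mathbg} with $s=1$, which is precisely the expansion the paper carries out explicitly inside its proof. Your attention to Fubini/Tonelli is if anything slightly more careful than the paper, which simply assumes throughout that expectation and integration commute (your remark that $1-M_X(-u)$ behaves like $u\,\bE\{X\}$ near $u=0$ tacitly assumes $\bE\{X\}<\infty$, but since the integrand is non-negative, Tonelli renders the identity valid in $[0,\infty]$ regardless).
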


\begin{proof}
Equation \eqref{2511a} is a trivial consequence of \eqref{1}.
As for \eqref{2511b}, we have
\begin{align}
&\mathrm{Var}\bigl\{\ln(1+X)\bigr\} \nonumber \\[0.1cm]
&= \bE\bigl\{\ln^2(1+X)\bigr\} - \Bigl( \bE\bigl\{\ln(1+X)\bigr\} \Bigr)^2
\nonumber \\[0.1cm]
&= \bE\biggl\{ \int_0^{\infty} \frac{e^{-u}}{u} \; \bigl(1 - e^{-uX} \bigr) \,
\mathrm{d}u
\int_0^{\infty} \frac{e^{-v}}{v} \; \bigl(1 - e^{-vX} \bigr) \, \mathrm{d}v
\biggr\} \nonumber \\[0.1cm]
& \hspace*{0.5cm} - \int_0^\infty \int_0^\infty \frac{e^{-(u+v)} \, [1 -
M_X(-u)] \, [1-M_X(-v)]}{uv} \; \mathrm{d}u \, \mathrm{d}v \\[0.1cm]
&= \int_0^\infty \int_0^\infty \frac{e^{-(u+v)}}{uv} \; \bE\Bigl\{ \bigl(1 -
e^{-uX} \bigr) \, \bigl(1 - e^{-vX} \bigr) \Bigr\}
\, \mathrm{d}u \, \mathrm{d}v \nonumber \\[0.1cm]
& \hspace*{0.5cm} - \int_0^\infty \int_0^\infty \frac{e^{-(u+v)}
\, [1 - M_X(-u) -M_X(-v) + M_X(-u) \, M_X(-v)]}{uv} \; \mathrm{d}u \,
\mathrm{d}v \\[0.1cm]
& = \int_0^\infty \int_0^\infty \frac{e^{-(u+v)}}{uv} \; \Bigl[ 1 - M_X(-u) -
M_X(-v) + M_X(-u-v) \Bigr]
\, \mathrm{d}u \, \mathrm{d}v \nonumber \\[0.1cm]
& \hspace*{0.5cm} - \int_0^\infty \int_0^\infty \frac{e^{-(u+v)}}{uv} \;
\Bigl[1 - M_X(-u) - M_X(-v) + M_X(-u) \, M_X(-v) \Bigr]
\; \mathrm{d}u \, \mathrm{d}v \\[0.1cm]
& = \int_0^\infty \int_0^\infty \frac{e^{-(u+v)}}{uv} \; \Bigl[ M_X(-u-v) -
M_X(-u) \, M_X(-v) \Bigr] \mathrm{d}u \, \mathrm{d}v.
\end{align}
\end{proof}

The following result relies on the validity of \eqref{log int. rep.} to the
right-half complex plane, and its derivation is based on the identity
$\ln(1+x^2) \equiv \ln(1+ix) + \ln(1-ix)$ for all $x \in \reals$.
In general, it may be used if the characteristic function of a random variable
$X$ has a closed--form expression, whereas the MGF of $X^2$ does not admit a
closed--form expression (see Proposition~\ref{Proposition 2}).
We introduce the result, although it is not directly used in the paper.

\begin{proposition}
\label{Proposition 3}
Let $X$ be a real-valued random variable, and let
\begin{align}
\Phi_X(u) := \bE\bigl\{e^{iuX}\bigr\}, \quad \forall \, u \in \reals,
\end{align}
be the characteristic function of $X$. Then,
\begin{align} \label{2411a}
\bE\bigl\{\ln(1+X^2)\bigr\} = 2 \int_0^\infty \frac{e^{-u}}{u} \;
\Bigl(1 - \mathrm{Re}\bigl\{\Phi_X(u)\bigr\} \Bigr) \, \mathrm{d}u,
\end{align}
and
\begin{align}
\mathrm{Var}\bigl\{\ln(1+X^2)\bigr\} &= 2 \int_0^\infty \int_0^\infty \frac{e^{-u-v}}{uv}
\, \Bigl[ \mathrm{Re}\{\Phi_X(u+v)\} + \mathrm{Re}\{\Phi_X(u-v)\} \nonumber \\
\label{2411b}
& \hspace*{3.5cm} - 2 \, \mathrm{Re}\{\Phi_X(u)\} \, \mathrm{Re}\{\Phi_X(v)\}
\Bigr] \, \mathrm{d}u \, \mathrm{d}v.
\end{align}
\end{proposition}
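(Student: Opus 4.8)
The plan is to reduce both identities to the cosine form of the integral representation, i.e.\ to Corollary~\ref{Corollary 1} and its consequence \eqref{011219}, by exploiting the factorization $\ln(1+x^2)\equiv\ln(1+ix)+\ln(1-ix)$, which is valid for every real $x$ (with the principal branch) and, crucially, places both arguments $1\pm ix$ in the region $\mathrm{Re}(z)=1>0$ where Proposition~\ref{Proposition 1} applies.

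First I would apply Proposition~\ref{Proposition 1} separately to $z=1+ix$ and $z=1-ix$ and add the two representations. Since $e^{-u(1+ix)}+e^{-u(1-ix)}=2e^{-u}\cos(ux)$, the two copies of $e^{-u}$ combine and the integrand collapses to $2e^{-u}\bigl(1-\cos(ux)\bigr)/u$, giving $\ln(1+x^2)=2\int_0^\infty e^{-u}\bigl(1-\cos(ux)\bigr)\,\mathrm{d}u/u$. Taking expectations and interchanging $\bE$ with the $u$-integration (Fubini, as assumed throughout), together with $\bE\{\cos(uX)\}=\mathrm{Re}\{\Phi_X(u)\}$, yields \eqref{2411a} at once.

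For the variance I would start from the single-integral representation just derived, write $\ln^2(1+X^2)$ as the product of two such integrals in independent variables $u$ and $v$, and push the expectation inside the resulting double integral; likewise $\bigl(\bE\{\ln(1+X^2)\}\bigr)^2$ is a double integral with the same kernel $4e^{-(u+v)}/(uv)$. Subtracting, and cancelling the ``$1-$'' terms exactly as in the proof of Proposition~\ref{Proposition 2}, the bracketed factor reduces to $\bE\{\cos(uX)\cos(vX)\}-\bE\{\cos(uX)\}\,\bE\{\cos(vX)\}$. The final step is the product-to-sum identity $\cos(uX)\cos(vX)=\tfrac12\bigl[\cos((u+v)X)+\cos((u-v)X)\bigr]$, whence $\bE\{\cos(uX)\cos(vX)\}=\tfrac12\bigl[\mathrm{Re}\{\Phi_X(u+v)\}+\mathrm{Re}\{\Phi_X(u-v)\}\bigr]$, while the product of expectations equals $\mathrm{Re}\{\Phi_X(u)\}\,\mathrm{Re}\{\Phi_X(v)\}$. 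Multiplying these into the kernel reproduces \eqref{2411b}.

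The computation is otherwise routine, so the only genuine subtlety I expect to flag is justifying the interchanges of expectation and integration. The point worth noting is that the inner factor $1-\cos(ux)$ is $O(u^2)$ as $u\to0$, so the apparent $1/u$ and $1/(uv)$ singularities are integrable near the origin while the factors $e^{-u}$ and $e^{-(u+v)}$ control the tails; this is precisely why the cosine form of Corollary~\ref{Corollary 1}, rather than the raw exponential representation, is the right vehicle here. Under the blanket Fubini assumption adopted in the paper this requires no further argument.
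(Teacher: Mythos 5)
Your proof is correct and follows exactly the route the paper itself indicates for Proposition~\ref{Proposition 3} (the paper states the result without a written-out proof, noting only that the derivation rests on the validity of \eqref{log int. rep.} in the right-half plane and the identity $\ln(1+x^2) \equiv \ln(1+ix)+\ln(1-ix)$): you apply Proposition~\ref{Proposition 1} to $z=1\pm ix$, combine into the cosine kernel $2e^{-u}\bigl(1-\cos(ux)\bigr)/u$, and handle the variance by the same double-integral cancellation used in Proposition~\ref{Proposition 2} together with the product-to-sum formula. Your checks of the interchange of expectation and integration, and of integrability near $u=0$ via $1-\cos(ux)=O(u^2)$, are consistent with the paper's blanket Fubini assumption.
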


As a final note, we point out that the fact that the integral representation
\eqref{ElnX} replaces the expectation of the logarithm of $X$ by the expectation
of an exponential function of $X$, has an additional interesting consequence: an
expression like $\ln(n!)$ becomes the integral of the sum of a geometric series,
which in turn is easy to express in closed form (see \cite[(2.3)--(2.4)]{Knessl98}).
Specifically,
\begin{align}
\label{lnnfactorial}
\ln(n!)&=\sum_{k=1}^n \ln k\nonumber\\
&=\sum_{k=1}^n \int_0^\infty(e^{-u}-e^{-uk}) \; \frac{\mathrm{d}u}{u}\nonumber\\
&=\int_0^\infty\left(ne^{-u}-\sum_{k=1}^ne^{-uk}\right)\frac{\mathrm{d}u}{u}\nonumber\\
&=\int_0^\infty e^{-u}\left(n-\frac{1-e^{-un}}{1-e^{-u}}\right)\frac{\mathrm{d}u}{u}.
\end{align}
Thus, for a positive integer-valued random variable, $N$, the calculation of
$\bE\{\ln N!\}$ requires merely the calculation of $\bE\{N\}$ and
the MGF, $\bE\{e^{-uN}\}$. For example, if $N$ is a Poissonian random
variable, as discussed near the end of the Introduction,
both $\bE\{N\}$ and $\bE\{e^{-uN}\}$ are easy to evaluate. This approach is a
simple, direct alternative
to the one taken in \cite{EvansB88} (see also \cite{Martinez07}), where
Malmst\'en's non--trivial formula for $\ln\Gamma(z)$ (see \cite[pp.~20--21]{EMOT81})
was invoked. (Malmst\'en's formula for $\ln\Gamma(z)$ applies to a general,
complex--valued $z$ with $\mathrm{Re}(z)>0$; in the present context, however, only
integer real values of $z$ are needed, and this allows the simplification shown in
\eqref{lnnfactorial}). The above described idea of the geometric series will also be
used in one of our application examples, in Section~\ref{subsec: empirical ent.}.

\section{Applications}
\label{appl}

In this section, we show the usefulness of the integral representation
of the logarithmic function in several problem areas in information theory.
To demonstrate the direct computability of the relevant quantities, we also
present graphs of their numerical calculation.
In some of the examples, we also demonstrate calculations of the second moments
and variances.

\subsection{Differential Entropy for Generalized
Multivariate Cauchy Densities}
\label{subsec:Cauchy}

Let $(X_1,\ldots,X_n)$ be a random vector whose probability density function is of the
form
\begin{align} \label{pdf gen. Cauchy}
f(x_1,\ldots,x_n)=\frac{C_n}{\left[1+\sum_{i=1}^n g(x_i)\right]^q}, \quad
\forall \, (x_1, \ldots, x_n) \in \reals^n,
\end{align}
for a certain non--negative function $g$ and positive constant $q$ such that
\begin{align}
\int_{\reals^n}\frac{\mathrm{d}\bx}{\left[1+\sum_{i=1}^ng(x_i)\right]^q} <
\infty.
\end{align}
We refer to this kind of density as a {\it generalized multivariate Cauchy density},
because the multivariate Cauchy density is obtained as a special case where
$g(x)=x^2$ and $q=\tfrac12 (n+1)$.
Using the Laplace transform relation,
\begin{align}
\frac{1}{s^q}=\frac{1}{\Gamma(q)}\int_0^\infty
t^{q-1}e^{-st} \, \mathrm{d}t, \quad \forall \, q>0, \; \mathrm{Re}(s) > 0,
\end{align}
$f$ can be represented as a mixture of product measures:
\begin{align}
f(x_1,\ldots,x_n)&=\frac{C_n}{\left[1+\sum_{i=1}^ng(x_i)\right]^q}\nonumber\\
\label{2911a1}
&=\frac{C_n}{\Gamma(q)} \int_0^\infty
t^{q-1}e^{-t} \, \exp\left\{-t\sum_{i=1}^ng(x_i)\right\} \, \mathrm{d}t.
\end{align}
Defining
\begin{align}
\label{eq:Z}
Z(t)\dfn\int_{-\infty}^{\infty}e^{-tg(x)} \, \mathrm{d}x, \quad \forall \, t>0,
\end{align}
we get from \eqref{2911a1},
\begin{align}
1&=\frac{C_n}{\Gamma(q)}\int_0^\infty
t^{q-1}e^{-t}\int_{\reals^n}\exp\left\{-t\sum_{i=1}^ng(x_i)\right\}
\, \mathrm{d}x_1 \, \ldots \, \mathrm{d}x_n \, \mathrm{d}t\nonumber\\
&=\frac{C_n}{\Gamma(q)} \int_0^\infty t^{q-1}e^{-t}
\left(\int_{-\infty}^{\infty} e^{-tg(x)} \, \mathrm{d}x \right)^n
\mathrm{d}t\nonumber\\
&=\frac{C_n}{\Gamma(q)}\int_0^\infty t^{q-1}e^{-t} Z^n(t) \, \mathrm{d}t,
\end{align}
and so,
\begin{align} \label{C_n}
C_n=\frac{\Gamma(q)}{\displaystyle \int_0^\infty t^{q-1}e^{-t}Z^n(t) \, \mathrm{d}t}.
\end{align}
The calculation of the differential entropy of $f$ is associated
with the evaluation of the expectation $\bE \Bigl\{\ln\bigl[1+\sum_{i=1}^ng(X_i)\bigr] \Bigr\}$.
Using \eqref{2511a},
\begin{align}
& \bE \Biggl\{ \ln\left[1+\sum_{i=1}^ng(X_i)\right] \Biggr\} \nonumber \\
\label{0812a1}
& =\int_0^\infty\frac{e^{-u}}{u}\left(1-\bE\left\{\exp\left[-u\sum_{i=1}^ng(X_i)\right]
\right\}\right)\mathrm{d}u.
\end{align}
From \eqref{2911a1} and by interchanging the integration,
\begin{align}
& \bE\left\{\exp\left[-u\sum_{i=1}^ng(X_i)\right]\right\} \nonumber \\
&= \frac{C_n}{\Gamma(q)}\int_0^\infty
t^{q-1}e^{-t}\int_{\reals^n}\exp\left\{-(t+u)\sum_{i=1}^ng(x_i)\right\}
\mathrm{d}x_1 \, \ldots \mathrm{d}x_n \, \mathrm{d}t \nonumber\\
\label{0812a2}
&=\frac{C_n}{\Gamma(q)}\int_0^\infty t^{q-1}e^{-t}Z^n(t+u) \, \mathrm{d}t.
\end{align}
In view of \eqref{2911a1}, \eqref{0812a1} and \eqref{0812a2}, the differential entropy of $(X_1,\ldots,X_n)$ is given by
\begin{align}
h(X_1,\ldots,X_n)&=
q \, \bE \Biggl\{\ln\left[1+\sum_{i=1}^ng(X_i)\right]\Biggr\}
-\ln C_n\nonumber\\[0.1cm]
&= q \int_0^\infty\frac{e^{-u}}{u}\left(1-\frac{C_n}{\Gamma(q)}\int_0^\infty
t^{q-1}e^{-t}Z^n(t+u)\mathrm{d}t\right)\mathrm{d}u-\ln C_n\nonumber\\[0.1cm]
\label{diff ent. gen. cauchy 1}
&= \frac{qC_n}{\Gamma(q)} \int_0^\infty \int_0^\infty
\frac{t^{q-1}e^{-(t+u)}}{u} \; \Bigl[Z^n(t)-Z^n(t+u)\Bigr] \, \mathrm{d}t \, \mathrm{d}u
-\ln C_n.
\end{align}

For $g(x)=|x|^\theta$, with an arbitrary $\theta > 0$, we obtain from \eqref{eq:Z} that
\begin{align} \label{eq:spec. Z}
Z(t)=\frac{2 \, \Gamma(1/\theta)}{\theta \, t^{1/\theta}}.
\end{align}
In particular, for $\theta=2$ and $q=\tfrac12 (n+1)$, we get
the multivariate Cauchy density from \eqref{pdf gen. Cauchy}.
In this case, since $\Gamma\bigl(\tfrac12\bigr) = \sqrt{\pi}$,
it follows from \eqref{eq:spec. Z} that $Z(t)=\sqrt{\frac{\pi}{t}}$ for $t>0$, and from \eqref{C_n}
\begin{align}
\label{cauchynormalization}
C_n=\frac{\Gamma\left(\frac{n+1}{2}\right)}{\pi^{n/2} \, \displaystyle \int_0^\infty
t^{(n+1)/2-1}e^{-t} \, t^{-n/2} \, \mathrm{d}t}
=\frac{\Gamma\left(\frac{n+1}{2}\right)}{\pi^{n/2} \, \Gamma\bigl(\tfrac12\bigr)}
=\frac{\Gamma\left(\frac{n+1}{2}\right)}{\pi^{(n+1)/2}}.
\end{align}
Combining \eqref{diff ent. gen. cauchy 1}, \eqref{eq:spec. Z} and
\eqref{cauchynormalization} gives
\begin{align}
h(X_1,\ldots,X_n)
&=\frac{n+1}{2\pi^{(n+1)/2}} \int_0^\infty\int_0^\infty
\frac{e^{-(t+u)}}{u\sqrt{t}}\left[1-\left(\frac{t}{t+u}\right)^{n/2}\right]
\, \mathrm{d}t \, \mathrm{d}u \nonumber\\[0.1cm]
\label{diff ent. gen. cauchy 2}
& \hspace*{0.4cm} +\frac{(n+1)\ln\pi}{2}-
\ln\Gamma\left(\frac{n+1}{2}\right).
\end{align}

\begin{figure}[h!t!b!]
\centering
\vspace*{-5cm}
\includegraphics[width=12cm]{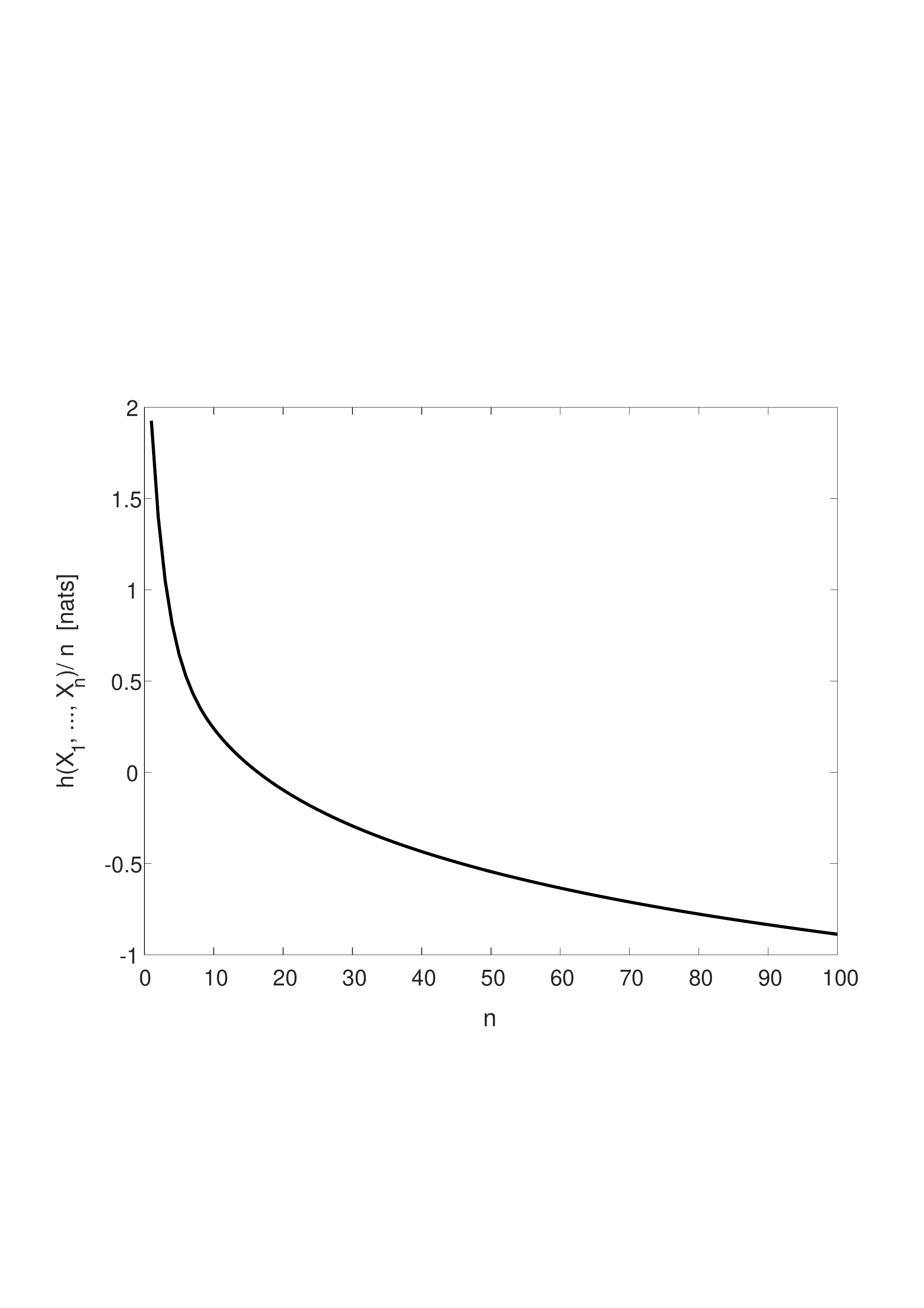}
\vspace*{-3.8cm}
\caption{The normalized differential entropy, $\tfrac1n \,
h(X_1,\ldots,X_n)$ (see \eqref{diff ent. gen. cauchy 2}), for a
multivariate Cauchy density,
$f(x_1,\ldots,x_n)=C_n/[1+\sum_{i=1}^nx_i^2]^{(n+1)/2}$, with $C_n$ in
\eqref{cauchynormalization}.}
\label{graph3}
\end{figure}
Fig.~\ref{graph3} displays the normalized differential entropy,
$\tfrac1n \, h(X_1,\ldots,X_n)$, for $1\le n\le 100$.

We believe that the interesting point, conveyed in this application example, is that
\eqref{diff ent. gen. cauchy 1} provides a kind of a ``single--letter expression'';
the $n$--dimensional integral, associated with the original expression of the
differential entropy $h(X_1, \ldots, X_n)$, is replaced by the two--dimensional
integral in \eqref{diff ent. gen. cauchy 1}, independently of~$n$.

As a final note, we mention that a lower bound on the differential entropy of
a different form of extended multivariate Cauchy distributions (cf. \cite[Eq.~(42)]{MK18})
was derived in \cite[Theorem~6]{MK18}. The latter result relies on obtaining lower
bounds on the differential entropy of random vectors whose densities are symmetric
log-concave or $\gamma$-concave (i.e., densities $f$ for which $f^\gamma$ is concave
for some $\gamma < 0$).

\subsection{Ergodic Capacity of the Fading SIMO Channel}
\label{subsec: SIMO}

Consider the SIMO channel with $L$ receive antennas and assume
that the channel transfer coefficients, $h_1,\ldots,h_L$, are
independent, zero--mean, circularly symmetric complex Gaussian
random variables with variances $\sigma_1^2,\ldots,\sigma_L^2$.
The ergodic capacity (in nats per channel use) of the SIMO channel
is given by
\begin{align} \label{capacity SIMO}
C = \bE \Biggl\{\ln\left(1+\rho \sum_{\ell=1}^L
|h_\ell|^2\right) \Biggr\} =
\bE \Biggl\{ \ln \left(1+\rho \sum_{\ell=1}^L
\bigl(f_\ell^2+g_\ell^2 \bigr)\right) \Biggr\},
\end{align}
where $f_\ell := \mathrm{Re}\{h_\ell\}$, $g_\ell := \mathrm{Im}\{h_\ell\}$,
and $\rho := \frac{P}{N_0}$ is the signal--to--noise ratio (SNR)
(see, e.g., \cite{DZWY15}, \cite{TV05} and many references therein).

Paper~\cite{DZWY15} is devoted, among other things, to the exact evaluation
of \eqref{capacity SIMO} by finding the density of the random variable
defined by $\sum_{\ell=1}^L(f_\ell^2+g_\ell^2)$, and then taking the
expectation w.r.t.\ that density. Here, we show that the integral
representation in \eqref{log int. rep.} suggests a more direct approach to
the evaluation of \eqref{capacity SIMO}.
It should also be pointed out that this approach is more flexible than the
one in \cite{DZWY15}, as the latter strongly depends on the assumption that
$\{h_i\}$ are Gaussian and statistically independent. The integral
representation approach also allows other distributions of the channel
transfer gains, as well as possible correlations between the coefficients
and/or the channel inputs. Moreover, we are also able to calculate the variance
of $\ln\left(1+\rho \sum_{\ell=1}^L |h_\ell|^2\right)$, as a measure of
the fluctuations around the mean, which is obviously related to the outage.

Specifically, in view of Proposition~\ref{Proposition 2} (see \eqref{2511a}), let
\begin{align}
X \dfn \rho \, \sum_{\ell=1}^L (f_\ell^2+g_\ell^2).
\end{align}
For all $u>0$,
\begin{align}
M_X(-u) &= \bE \biggl\{ \exp \biggl( - \rho u \,
\sum_{\ell=1}^L (f_\ell^2+g_\ell^2) \biggr) \biggr\}
\nonumber \\[0.1cm]
&= \prod_{\ell=1}^L \biggl\{ \bE\Bigl\{e^{-u\rho
f_\ell^2} \Bigr\} \; \bE\Bigl\{e^{-u\rho g_\ell^2}
\Bigr\} \biggr\} \nonumber \\[0.1cm]
\label{MGF1}
&= \prod_{\ell=1}^L \frac{1}{1+u\rho\sigma_\ell^2},
\end{align}
where \eqref{MGF1} holds since
\begin{align}
\bE\left\{e^{-u\rho f_\ell^2}\right\}
&=\bE\left\{e^{-u\rho g_\ell^2}\right\} \nonumber \\[0.1cm]
&=\int_{-\infty}^\infty \frac{\mathrm{d}w}{\sqrt{\pi\sigma_\ell^2}} \,
e^{-w^2/\sigma_\ell^2} \, e^{-u\rho w^2} \nonumber\\[0.1cm]
&=\frac{1}{\sqrt{1+u\rho\sigma_\ell^2}}.
\end{align}
From \eqref{2511a}, \eqref{capacity SIMO} and \eqref{MGF1}, the ergodic
capacity (in nats per channel use) is given by
\begin{align}
C&=\bE \left\{ \ln\left(1+\rho \sum_{\ell=1}^L \bigl(f_\ell^2+g_\ell^2 \bigr) \right) \right\} \nonumber \\[0.1cm]
&=\int_0^\infty\frac{e^{-u}}{u}\left(1-\prod_{\ell=1}^L\frac{1}{1+u\rho\sigma_\ell^2}\right) \, \mathrm{d}u \nonumber \\[0.1cm]
\label{erg. C}
&=\int_0^\infty\frac{e^{-x/\rho}}{x}\left(1-\prod_{\ell=1}^L\frac{1}{1+\sigma_\ell^2x}\right) \, \mathrm{d}x.
\end{align}
A similar approach appears in \cite[Eq.~(12)]{RajanT15}.
As for the variance, from Proposition~\ref{Proposition 2} (see \eqref{2511b}) and \eqref{MGF1},
\begin{align}
& \mathrm{Var}\left\{\ln\left(1+\rho \sum_{\ell=1}^L[f_\ell^2+g_\ell^2]\right)\right\} \nonumber \\[0.1cm]
\label{var SIMO}
&= \int_0^\infty\int_0^\infty
\frac{e^{-(x+y)/\rho}}{xy} \, \Biggl\{ \prod_{\ell=1}^L \frac{1}{1+\sigma_\ell^2(x+y)}-
\prod_{\ell=1}^L \biggl[ \frac{1}{(1+\sigma_\ell^2x)(1+\sigma_\ell^2y)} \biggr] \Biggr\}
\, \mathrm{d}x \, \mathrm{d}y.
\end{align}

A similar analysis holds for the multiple-input single-output (MISO) channel.
By partial--fraction decomposition of the expression (see the
right side of \eqref{erg. C})
$$\frac{1}{x}\left(1-\prod_{\ell=1}^L\frac{1}{1+\sigma_\ell^2x}\right),$$
the ergodic capacity $C$ can be expressed as a linear combination of integrals of the form
\begin{align}
\int_0^\infty\frac{e^{-x/\rho} \, \mathrm{d}x}{1+\sigma_\ell^2x}&=
\frac{1}{\sigma_\ell^2}
\int_0^\infty\frac{e^{-t} \, \mathrm{d}t}{t+1/(\sigma_\ell^2\rho)}\nonumber\\[0.1cm]
&= \frac{e^{1/(\sigma_\ell^2\rho)}}{\sigma_\ell^2}
\int_{1/(\sigma_\ell^2\rho)}^\infty\frac{e^{-s}}{s} \; \mathrm{d}s \nonumber\\[0.1cm]
&= \frac1{\sigma_\ell^2} \; e^{1/(\sigma_\ell^2\rho)} \;
E_1\biggl(\frac{1}{\sigma_\ell^2\rho}\biggr),
\end{align}
where $E_1(\cdot)$ is the (modified) exponential integral function, defined as
\begin{align} \label{E1 func.}
E_1(x):=\int_x^\infty\frac{e^{-s}}{s} \; \mathrm{d}s, \quad \forall \, x>0.
\end{align}
A similar representation appears also in \cite[Eq.~(7)]{DZWY15}.

Consider the example of $L=2$, $\sigma_1^2=\tfrac12$ and $\sigma_2^2=1$.
From \eqref{erg. C}, the ergodic capacity of the SIMO channel is given by
\begin{align}
C&=\int_0^\infty\frac{e^{-x/\rho}}{x}\left[1-\frac{1}{(x/2+1)(x+1)}\right] \mathrm{d}x \nonumber \\
&=\int_0^\infty \frac{e^{-x/\rho} \, (x+3) \, \mathrm{d}x}{(x+1)(x+2)} \nonumber \\
\label{erg. C 2}
&= 2 e^{1/\rho} \, E_1\biggl(\frac1{\rho}\biggr) - e^{2/\rho} \, E_1\biggl(\frac{2}{\rho}\biggr).
\end{align}
The variance in this example (see \eqref{var SIMO}) is given by
\begin{align}
& \mathrm{Var} \left\{\ln\left(1+\rho\sum_{\ell=1}^2
(f_\ell^2+g_\ell^2) \right) \right\}\nonumber\\[0.1cm]
&= \int_0^\infty \int_0^\infty\frac{e^{-(x+y)/\rho}}{xy}
\, \bigg[\frac{1}{\bigl(1+0.5(x+y)\bigr)(1+x+y)} \nonumber\\
& \hspace*{4cm} -\frac{1}{(1+0.5x)(1+0.5y)(1+x)(1+y)}\bigg]
\, \mathrm{d}x \, \mathrm{d}y \nonumber \\[0.1cm]
\label{var SIMO 2}
&=\int_0^\infty \int_0^\infty
\frac{e^{-(x+y)/\rho} \, (2xy+6x+6y+10) \, \mathrm{d}x \,
\mathrm{d}y}{(x+1)(y+1)(x+2)(y+2)(x+y+1)(x+y+2)}.
\end{align}
Fig.~\ref{graph2a} depicts the ergodic capacity $C$ as a function of the SNR,
$\rho$, in dB (see \eqref{erg. C 2}, and divide by $\ln 2$ for conversion to bits per channel use).
The same example exactly appears in the lower graph of Fig.~1 in \cite{DZWY15}. The variance
appears in Fig.~\ref{graph2b} (see \eqref{var SIMO 2}, and similarly divide by $\ln^2 2$).

\begin{figure}[h!t!b!]
\centering
\vspace*{-4.4cm}
\includegraphics[width=11.5cm]{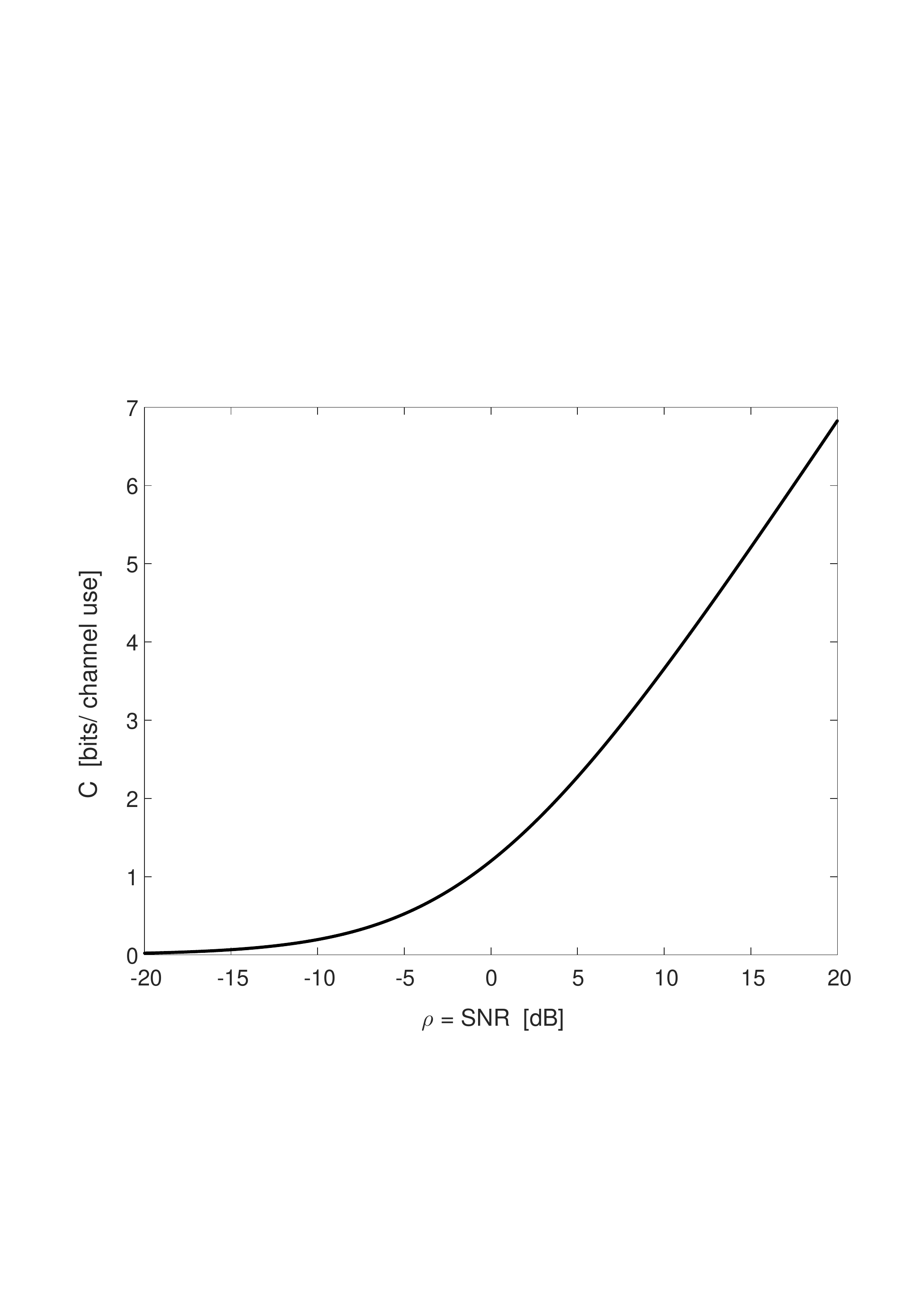}
\vspace*{-3.7cm}
\caption{The ergodic capacity $C$ (in bits per channel use) of the SIMO
channel as a function of $\rho = \mathrm{SNR}$ (in dB) for $L=2$ receive antennas,
with noise variances $\sigma_1^2=\tfrac12$ and $\sigma_2^2=1$.}
\label{graph2a}
\end{figure}

\begin{figure}[h!t!b!]
\centering
\vspace*{-4.4cm}
\includegraphics[width=11.5cm]{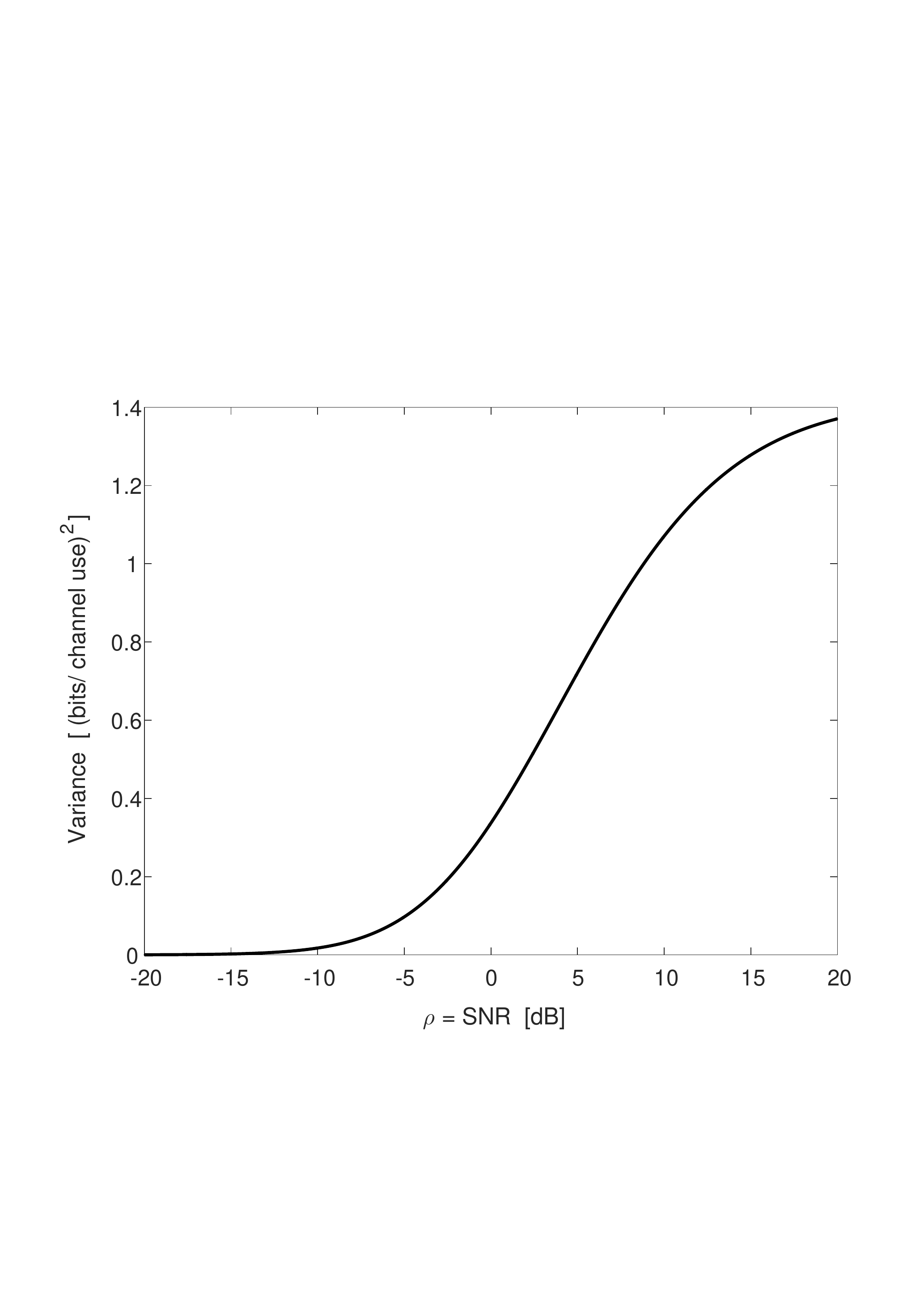}
\vspace*{-3.7cm}
\caption{The variance of $\ln(1+\rho\sum_{\ell=1}^L|h_\ell|^2)$ (in
$[\mbox{bits-per-channel-
use}]^2$) of the SIMO channel
as a function of $\rho = \mathrm{SNR}$ (in dB) for $L=2$ receive antennas,
with noise variances $\sigma_1^2=\tfrac12$ and $\sigma_2^2=1$.}
\label{graph2b}
\end{figure}

\subsection{Universal Source Coding for Binary Arbitrarily Varying Sources}
\label{subsec: universal coding}

Consider a source coding setting, where there are $n$ binary DMS's,
and let $x_i \in [0,1]$ denote the Bernoulli parameter of source no.~$i
\in \{1, \ldots, n\}$. Assume that a hidden memoryless switch selects
uniformly at random one of these sources, and the data is then emitted by
the selected source. Since it is unknown a-priori which source is selected
at each instant, a universal lossless source encoder (e.g., a Shannon or
Huffman code) is designed to match a binary DMS whose Bernoulli parameter
is given by $\tfrac1n \overset{n}{\underset{i=1}{\sum}} x_i$. Neglecting
integer length constraints, the average redundancy in the compression rate
(measured in nats per symbol), due to the unknown realization of the hidden
switch, is about
\begin{align}
\label{compression1}
R_n = h_{\mathrm{b}}\biggl( \tfrac1n \sum_{i=1}^n x_i \biggr)
- \tfrac1n \sum_{i=1}^n h_{\mathrm{b}}(x_i),
\end{align}
where $h_{\mathrm{b}} \colon [0,1] \to [0, \ln 2]$ is the binary entropy function
(defined to the base~$e$), and the redundancy is given in nats per source symbol.
Now, let us assume that the Bernoulli parameters of the $n$ sources are
i.i.d. random variables, $X_1, \ldots, X_n$, all having the same density
as that of some generic random variable $X$,
whose support is the interval $[0,1]$.
We wish to evaluate the expected value of the above defined redundancy,
under the assumption that the realizations of $X_1,\ldots,X_n$ are known.
We are then facing the need to evaluate
\begin{align}
\label{compression2}
\overline{R}_n=\bE\biggl\{ h_{\mathrm{b}}\biggl( \tfrac1n \sum_{i=1}^n X_i \biggr) \biggr\}
- \bE\{h_{\mathrm{b}}(X)\}.
\end{align}
We now express the first and second terms on the right--hand side of \eqref{compression2}
as a function of the MGF of $X$.

In view of \eqref{log int. rep.}, the binary entropy function $h_{\mathrm{b}}$ admits the
integral representation
\begin{align}
\label{bin. ent. 1}
h_{\mathrm{b}}(x) = \int_0^{\infty} \frac1{u} \, \Bigl[ x e^{-ux} + (1-x) e^{-u(1-x)} -
e^{-u} \Bigr] \, \mathrm{d}u, \quad \forall \, x \in [0,1],
\end{align}
which implies that
\begin{align}  \label{bin. ent. 2}
\bE\{ h_{\mathrm{b}}(X) \} &= \int_0^{\infty} \frac1{u} \, \Bigl[ \bE\bigl\{X
e^{-uX}\bigr\} + \bE\bigl\{(1-X) e^{-u(1-X)}\bigr\} - e^{-u} \Bigr] \,
\mathrm{d}u.
\end{align}
The expectations on the right--hand side of \eqref{bin. ent. 2} can be
expressed as functionals of the MGF of $X$,
$M_X(\nu)=\bE\{e^{\nu X}\}$, and its derivative, for $\nu < 0$.
For all $u \in \reals$,
\begin{align}
\label{bin. ent. 3}
& \bE\bigl\{X e^{-uX}\bigr\} = M_X'(-u),
\end{align}
and
\begin{align}
\bE \bigl\{(1-X) e^{-u(1-X)} \bigr\} &= M'_{1-X}(-u) \nonumber \\
&= \frac{\mathrm{d}}{\mathrm{d}s} \Bigl\{e^s \, M_X(-s) \Bigr\}\Bigl|_{s=-u}
\nonumber \\
\label{bin. ent. 4}
&= e^{-u} \Bigl[ M_X(u) - M_X'(u) \Bigr].
\end{align}
On substituting \eqref{bin. ent. 3} and \eqref{bin. ent. 4} into
\eqref{bin. ent. 2}, we readily obtain
\begin{align}
\label{bin. ent. 5}
\bE\{ h_{\mathrm{b}}(X) \} &= \int_0^{\infty} \frac1{u} \, \Bigl\{M_X'(-u) + \bigl[M_X(u) -
M_X'(u) -1\bigr] e^{-u} \Bigr\} \, \mathrm{d}u.
\end{align}
Define $Y_n \dfn \tfrac1n \overset{n}{\underset{i=1}{\sum}} X_i$. Then,
\begin{align}
\label{bin. ent. 6}
M_{Y_n}(u) = M_X^n\bigl(\tfrac{u}{n}\bigr), \quad \forall \, u \in \reals,
\end{align}
which yields, in view of \eqref{bin. ent. 5}, \eqref{bin. ent. 6} and the
change of integration variable, $t=\frac{u}{n}$, the following:
\begin{align}
&\bE\biggl\{ h_{\mathrm{b}}\biggl( \tfrac1n \sum_{i=1}^n X_i \biggr) \biggr\} \nonumber \\
&=\bE\{ h_{\mathrm{b}}(Y_n) \} \nonumber \\[0.1cm]
&= \int_0^{\infty} \frac1{u} \, \Bigl\{M_{Y_n}'(-u) + \bigl[M_{Y_n}(u) -
M_{Y_n}'(u) -1\bigr] e^{-u} \Bigr\} \, \mathrm{d}u \nonumber \\[0.1cm]
\label{bin. ent. 7}
&= \int_0^{\infty} \frac1t \, \Bigl\{ M_{X}^{n-1}(-t) \, M_{X}'(-t) +
\bigl[ M_{X}^n(t)- M_{X}^{n-1}(t) \, M_{X}'(t) - 1 \bigr] \, e^{-nt}
\Bigr\} \, \mathrm{d}t.
\end{align}
Similarly as in the application example of the differential entropy
in Section~\ref{subsec:Cauchy}, here too, we pass from an $n$-dimensional
integral to a one-dimensional integral. In general, similar calculations
can be carried out for higher integer moments, thus passing from
$n$-dimensional integration for a moment of order $s$ to an $s$-dimensional
integral, independently of $n$.

For example, if $X_1, \ldots, X_n$ are i.i.d. and uniformly distributed on
[0,1], then the MGF of a generic random variable $X$ distributed like all
$\{X_i\}$ is given by
\begin{align}
M_X(t) =
\begin{dcases}
\frac{e^t-1}{t}, \quad & t \neq 0, \\
\hspace*{0.4cm} 1, \quad &t=0.
\end{dcases}
\end{align}
From \eqref{bin. ent. 7}, it can be verified numerically that
$\bE\biggl\{ h_{\mathrm{b}}\biggl( \tfrac1n \overset{n}{\underset{i=1}{\sum}}
X_i \biggr) \biggr\}$ is monotonically increasing in~$n$,
being equal (in nats) to $\tfrac12$, 0.602, 0.634, 0.650, 0.659 for
$n=1,\ldots,5$, respectively, with the limit $h_{\mathrm{b}}\bigl(\tfrac12\bigr) =
\ln 2 \approx 0.693$ as we let $n \to \infty$ (this is expected
by the law of large numbers).

\subsection{Moments of the Empirical Entropy and the Redundancy of K--T
Universal Source Coding}
\label{subsec: empirical ent.}

Consider a stationary, discrete memoryless source (DMS), $P$, with
a finite alphabet $\calX$ of size $|\calX|$ and letter probabilities
$\{P(x),~x\in\calX\}$. Let $(X_1,\ldots,X_n)$ be an $n$--vector emitted
from $P$, and let $\{\hP(x),~x\in\calX\}$ be the empirical distribution
associated with $(X_1,\ldots,X_n)$, that is, $\hP(x)=\frac{n(x)}{n}$,
for all $x \in\calX$, where $n(x)$ is the number of occurrences of the
letter $x$ in $(X_1,\ldots,X_n)$.

It is well known that in many universal lossless source codes for the
class of memoryless sources, the dominant term of the length function
for encoding $(X_1,\ldots,X_n)$ is $n\hH$, where $\hH$ is the empirical
entropy,
\begin{align}
\hat{H}=-\sum_x \hP(x)\ln\hP(x).
\end{align}
For code--length performance analysis (as well as for entropy estimation
per se), there is therefore interest in calculating the expected value
$\bE\{\hH\}$ as well as $\mathrm{Var}\{\hH\}$.
Another motivation comes from the quest for estimating the entropy as an
objective on its own right, and then the expectation and the variance suffice
for the calculation of the mean square error of the estimate, $\hH$.
Most of the results that are available in the literature, in this context,
concern the asymptotic behavior for large $n$ as well as bounds (see, e.g.,
\cite{BRY98}, \cite{Blumer87}, \cite{ClarkeB90}, \cite{CB94}, \cite{Davisson73},
\cite{Davisson83}, \cite{DMPW81}, \cite{KT81}, \cite{MF98}, \cite{Rissanen83},
\cite{Rissanen84}, \cite{Rissanen96}, \cite{Shtarkov87}, \cite{WRF95},
\cite{XB97}, as well as many other related references therein).
The integral representation of the logarithm in \eqref{log int. rep.}, on
the other hand, allows exact calculations of the expectation and the variance.
The expected value of the empirical entropy is given by
\begin{align}
\bE\{\hat{H}\}&= -\sum_x\bE\{\hP(x)\ln\hP(x)\}\nonumber\\
&=\sum_x\bE\left\{\int_0^\infty\frac{\mathrm{d}u}{u}
\left[\hP(x)e^{-u\hP(x)}-\hP(x)e^{-u}\right]\right\}\\
\label{EH_0212}
&=\int_0^\infty\frac{\mathrm{d}u}{u}
\left[\sum_x\bE\{\hP(x)e^{-u\hP(x)}\}-e^{-u}\right].
\end{align}
For convenience, let us define the function
$\phi_n \colon \mathcal{X} \times \reals \to (0,\infty)$ as
\begin{align} \label{phi}
\phi_n(x,t)\dfn \bE\bigl\{e^{t\hP(x)}\bigr\}=\Bigl[1-P(x)+P(x)e^{t/n}\Bigr]^n,
\end{align}
which yields,
\begin{align}
\label{diff1 phi}
& \bE\bigl\{\hP(x)e^{-u\hP(x)}\bigr\}=\phi_n'(x,-u), \\
\label{diff2 phi}
& \bE\bigl\{\hP^2(x)e^{-u\hP(x)}\bigr\}=\phi_n''(x,-u),
\end{align}
where $\phi_n'$ and $\phi_n''$ are the first and second order
derivatives of $\phi_n$ w.r.t.\ $t$, respectively. From
\eqref{EH_0212} and \eqref{diff1 phi}, it follows that
\begin{align}
\label{int1a}
\bE\{\hat{H}\}
&=\int_0^\infty\frac{\mathrm{d}u}{u}\left(\sum_x\phi_n'(x,-u)
-e^{-u}\right) \\
\label{int1b}
&=\int_0^\infty \frac{\mathrm{d}u}{u}\left(e^{-u}\sum_x P(x)
\left[1-P(x)(1-e^{-u})\right]^{n-1}-e^{-nu}\right)
\end{align}
where the integration variable in \eqref{int1b} was changed
using a simple scaling by $n$.

Before proceeding with the calculation of the variance of $\hat{H}$,
let us first compare the
integral representation in \eqref{int1b} to the alternative sum,
obtained by a direct, straightforward calculation of the expected
value of the empirical entropy. A straightforward calculation gives
\begin{align}
\bE\{\hH\} &=
\sum_x\sum_{k=0}^n
\binom{n}{k} \, P^k(x) \, [1-P(x)]^{n-k}
\cdot \frac{k}{n} \cdot \ln\frac{n}{k}\\
&= \sum_x \sum_{k=1}^n \binom{n-1}{k-1} P^{k}(x)
[1-P(x)]^{n-k} \cdot \ln \frac{n}{k}. \label{from log rep.}
\end{align}
We next compare the computational complexity of implementing \eqref{int1b}
to that of \eqref{from log rep.}. For large $n$, in order to avoid numerical
problems in computing \eqref{from log rep.} by standard software, one may
use the {\tt{Gammaln}} function in Matlab/Excel or the {\tt LogGamma} in
Mathematica (a built-in function for calculating the natural logarithm of
the Gamma function) to obtain that
\begin{align}
\binom{n-1}{k-1} P^{k}(x) [1-P(x)]^{n-k}
& = \exp \bigg\{\mathrm{Gammaln}(n)-\mathrm{Gammaln}(k)-
\mathrm{Gammaln}(n-k+1) \nonumber \\
& \hspace*{1.5cm} + k\ln P(x) +(n-k) \ln \bigl(1-P(x)\bigr) \bigg\}.
\end{align}
The right--hand side of \eqref{int1b} is the sum of $|\mathcal{X}|$
integrals, and the computational complexity of each integral depends
on neither $n$, nor $|\mathcal{X}|$. Hence, the computational complexity
of the right--hand side of \eqref{int1b} scales {\em linearly} with
$|\mathcal{X}|$. On the other hand, the double sum on the right--hand
side of \eqref{from log rep.} consists of $n\cdot|\mathcal{X}|$ terms.
Let $\alpha \dfn \frac{n}{|\mathcal{X}|}$ be fixed, which is expected
to be large ($\alpha \gg 1$) if a good estimate of the entropy is sought.
The computational complexity of the double sum on the right--hand side
of \eqref{from log rep.} grows like $\alpha \, |\mathcal{X}|^2$, which
scales {\em quadratically} in $|\mathcal{X}|$.
Hence, for a DMS with a large alphabet, or when $n \gg |X|$, there is
a significant computational reduction by evaluating \eqref{int1b} in
comparison to the right--hand side of \eqref{from log rep.}.

We next move on to calculate the variance of $\hH$.
\begin{align}
\mathrm{Var}\{\hat{H}\}
&=\bE\{\hH^2\}-\bE^2\{\hH\}\\
\label{201119a1}
&=\sum_{x,x'}\bE\{
\hP(x)\ln\hP(x) \cdot \hP(x')\ln\hP(x')\}-\bE^2\{\hH\}.
\end{align}
The second term on the right--hand side of \eqref{201119a1} has already been
calculated. For the first term, let us define, for $x'\ne x$,
\begin{align}
\psi_n(x,x',s,t)
&\dfn \bE\{\exp\{s\hP(x)+t\hP(x')\} \\
&=\sum_{\{(k,\ell):~k+\ell\le n\}} \biggr\{ \frac{n!}{k! \,
\ell! \, (n-k-\ell)!} \cdot P^{k}(x) \, P^{\ell}(x') \nonumber\\
& \hspace*{3.2cm} \cdot \bigl[1-P(x)-P(x')\bigr]^{n-k-\ell} \,
e^{sk/n+t\ell/n} \biggr\} \\
&=\sum_{\{(k,\ell):~k+\ell\le n\}} \biggl\{ \frac{n!}{k! \, \ell! \,
(n-k-\ell)!} \cdot \bigl[P(x) \, e^{s/n}\bigr]^{k} \,
\bigl[P(x') \, e^{t/n}\bigr]^{\ell} \nonumber\\
& \hspace*{3.2cm} \cdot \bigl[1-P(x)-P(x')\bigr]^{n-k-\ell} \biggr\} \\
\label{eq: psi}
&=\left[1-P(x) \, (1-e^{s/n})-P(x') \, (1-e^{t/n})\right]^n.
\end{align}
Observe that
\begin{align}
\bE\{\hP(x)\hP(x')\exp\{-u\hP(x)-v\hP(x')\}
&=\frac{\partial^2\psi_n(x,x',s,t)}{\partial
s \, \partial t}\bigg|_{s=-u, \, t=-v} \\
\label{diff2 psi}
& := \psi_n''(x,x',-u,-v).
\end{align}
For $x\ne x'$, we have
\begin{align}
&\bE\{\hP(x)\ln\hP(x) \cdot \hP(x')\ln\hP(x')\} \nonumber \\[0.05cm]
&= \bE\left\{\hP(x)\hP(x')\int_0^\infty \int_0^\infty \frac{\mathrm{d}u \,
\mathrm{d}v}{uv}\cdot \bigl(e^{-u}-e^{-u\hP(x)} \bigr)\cdot
\bigl(e^{-v}-e^{-v\hP(x')} \bigr) \right\} \label{201119a2} \\[0.1cm]
&=\int_0^\infty\int_0^\infty \frac{\mathrm{d}u \, \mathrm{d}v}{uv} \;
\bigg[e^{-u-v} \bE\bigl\{\hP(x)\hP(x')\bigr\}
-e^{-v} \bE\bigl\{\hP(x)\hP(x')e^{-u\hP(x)}\bigr\} \nonumber\\[0.1cm]
& \hspace*{3cm} -e^{-u} \bE\bigl\{\hP(x)\hP(x')e^{-v\hP(x')}\bigr\}
+\bE\bigl\{\hP(x) \hP(x') e^{-u\hP(x)-v\hP(x')}\bigr\}\bigg] \label{201119a3} \\[0.1cm]
&=\int_0^\infty \int_0^\infty\frac{\mathrm{d}u \, \mathrm{d}v}{uv} \;
\bigg[e^{-u-v}\psi_n''(x,x',0,0)-e^{-v}\psi_n''(x,x',-u,0) \nonumber \\[0.1cm]
& \hspace*{3cm} -e^{-u}\psi_n''(x,x',0,-v)+\psi_n''(x,x',-u,-v)\bigg],
\label{201119a4}
\end{align}
and for $x=x'$,
\begin{align}
& \bE\{[\hP(x)\ln\hP(x)]^2\} \nonumber \\[0.1cm]
&= \bE\left\{\hP^2(x)\int_0^\infty\int_0^\infty\frac{\mathrm{d}u \,
\mathrm{d}v}{uv}\cdot
\bigl[e^{-u}-e^{-u\hP(x)}\bigr]\cdot
\bigl[e^{-v}-e^{-v\hP(x)}\bigr]\right\} \label{201119a5} \\[0.1cm]
&= \int_0^\infty\int_0^\infty\frac{\mathrm{d}u \, \mathrm{d}v}{uv} \;
\bigg[e^{-u-v}\bE\bigl\{\hP^2(x)\bigr\}-e^{-v}\bE\bigl\{\hP^2(x)e^{-u\hP(x)}\bigr\}
\nonumber\\[0.1cm]
& \hspace*{3.2cm} -e^{-u}\bE\bigl\{\hP^2(x)e^{-v\hP(x)}\bigr\}
+\bE\bigl\{\hP^2(x)e^{-(u+v)\hP(x)}\bigr\}\bigg] \label{201119a6} \\[0.1cm]
& =\int_0^\infty\int_0^\infty\frac{\mathrm{d}u \, \mathrm{d}v}{uv} \;
\bigg[e^{-u-v}\phi_n''(x,0)-e^{-v}\phi_n''(x,-u) \nonumber \\[0.1cm]
& \hspace*{3.2cm} -e^{-u}\phi_n''(x,-v)+\phi_n''(x,-u-v)\bigg].
\label{201119a7}
\end{align}
Therefore,
\begin{align}
& \mathrm{Var}\{\hH\} \nonumber \\[0.1cm]
&= \sum_x\int_0^\infty\int_0^\infty\frac{\mathrm{d}u \, \mathrm{d}v}{uv} \;
\bigg[e^{-u-v}\phi_n''(x,0)-e^{-v}\phi_n''(x,-u)-
e^{-u}\phi_n''(x,-v)+\phi_n''(x,-u-v)\bigg] \nonumber\\[0.1cm]
&\hspace*{0.4cm} +\sum_{x'\ne x}\int_0^\infty\int_0^\infty\frac{\mathrm{d}u \,
\mathrm{d}v}{uv} \;
\bigg[e^{-u-v}\psi_n''(x,x',0,0)-e^{-v}\psi_n''(x,x',-u,0) \nonumber\\[0.1cm]
& \hspace*{4.4cm} -e^{-u}\psi_n''(x,x',0,-v) +
\psi_n''(x,x',-u,-v)\bigg] -\bE^2\{\hH\}. \label{201119a10}
\end{align}
Defining (see \eqref{diff2 phi} and \eqref{diff2 psi})
\begin{align}
Z(r,s,t) \dfn \sum_x\phi_n''(x,r)+\sum_{x'\ne x}\psi_n''(x,x',s,t),
\end{align}
we have
\begin{align}
\mathrm{Var}\{\hH\}
&=\int_0^\infty \int_0^\infty \frac{\mathrm{d}u \, \mathrm{d}v}{uv} \,
\Bigl[e^{-u-v}Z(0,0,0)-e^{-v}Z(-u,-u,0) \nonumber\\[0.1cm]
& \hspace*{3.5cm} -e^{-u}Z(-v,0,-v)+Z(-u-v,-u,-v)\Bigr]-\bE^2\{\hH\}.
\end{align}

To obtain numerical results, it would be convenient to
particularize now the analysis to the binary symmetric source (BSS). From
\eqref{int1b},
\begin{align} \label{int BSS}
\bE\{\hH\}=\int_0^\infty \frac{\mathrm{d}u}{u} \left[e^{-u} \, \Bigl(\frac{
1+e^{-u}}{2}\Bigr)^{n-1}-e^{-un}\right].
\end{align}
For the variance, it follows from \eqref{eq: psi} that
for $x \neq x'$ with $x,x' \in \{0,1\}$ and $s,t \in \reals$,
\begin{align}
\psi_n(x,x',s,t)
&= \biggl(\frac{e^{s/n} + e^{t/n}}{2}\biggr)^n, \\[0.1cm]
\psi_n''(x,x',s,t)
&= \frac{\partial^2\psi_n(x,x',s,t)}{\partial s \, \partial t}
= \tfrac14 \Bigl(1-\tfrac1n\Bigr) \biggl(\frac{e^{s/n} +
e^{t/n}}{2}\biggr)^{n-2} e^{(s+t)/n},
\end{align}
and, from \eqref{201119a2}--\eqref{201119a4}, for $x \neq x'$
\begin{align}
&\bE\{\hP(x)\ln\hP(x) \cdot \hP(x')\ln\hP(x')\} \nonumber \\[0.1cm]
&= \tfrac14 \Bigl(1-\tfrac1n\Bigr) \int_0^\infty \int_0^\infty \frac{\mathrm{d}u
\, \mathrm{d}v}{u v} \, \biggl[ e^{-u-v}
- e^{-\bigl(u/n+v\bigr)} \biggl(\frac{1 +
e^{-u/n}}{2}\biggr)^{n-2} \nonumber \\[0.1cm]
&\hspace*{5cm} - e^{-(u+v/n)} \biggl(\frac{1 +
e^{-v/n}}{2}\biggr)^{n-2} \nonumber \\[0.1cm]
&\hspace*{5cm}
+ e^{-(u+v)/n} \biggl(\frac{e^{-u/n} +
e^{-v/n}}{2}\biggr)^{n-2}\biggr] \label{201119a8}.
\end{align}
From \eqref{phi}, for $x \in \{0,1\}$ and $t \in \reals$,
\begin{align}
& \phi_n(x,t) = \biggl(\frac{1 + e^{t/n}}{2}\biggr)^n, \\[0.1cm]
& \phi_n''(x,t) = \frac{\partial^2 \phi_n(x,t)}{\partial t^2}
= \frac{e^{t/n}}{4n} \biggl(\frac{1 +
e^{t/n}}{2}\biggr)^{n-2} \Bigl(1 + n e^{t/n} \Bigr),
\end{align}
and, from \eqref{201119a5}--\eqref{201119a7}, for $x \in \{0,1\}$,
\begin{align}
&\bE\{[\hP(x)\ln\hP(x)]^2\} \nonumber \\[0.1cm]
&= \tfrac1{4n} \int_0^\infty \int_0^\infty \frac{\mathrm{d}u \, \mathrm{d}v}{u v}
\, \biggl\{ (n+1) e^{-u-v} - e^{-\bigl(u/n+v\bigr)} \biggl(\frac{
1 + e^{-u/n}}{2}\biggr)^{n-2} \, \Bigl(1 + n e^{-u/n}
\Bigr) \nonumber \\[0.1cm]
& \hspace*{4cm} - e^{-\bigl(u+v/n\bigr)} \biggl(\frac{1 +
e^{-v/n}}{2}\biggr)^{n-2} \, \Bigl(1 + n e^{-v/n} \Bigr)
\nonumber \\[0.1cm]
& \hspace*{4cm} + e^{-(u+v)/n} \biggl(\frac{1 +
e^{-(u+v)/n}}{2}\biggr)^{n-2} \, \Bigl(1 + n e^{-(u+v)/n} \Bigr)
\biggr\} \label{201119a9}.
\end{align}
Combining \eqref{201119a10}, \eqref{201119a8} and \eqref{201119a9} gives
the following closed--form expression for the variance of the empirical
entropy:
\begin{align}
& \mathrm{Var}\{\hH\} \nonumber \\
&= \tfrac12 \Bigl(1+\tfrac1n\Bigr) \int_0^\infty
\int_0^\infty \frac{\mathrm{d}u \, \mathrm{d}v}{u v} \,
\biggl[ e^{-(u+v)} - e^{-v} f_n\Bigl(\frac{u}{n}\Bigr) - e^{-u}
f_n\Bigl(\frac{v}{n}\Bigr)
+ f_n\Bigl(\frac{u+v}{n}\Bigr) \biggr] \nonumber \\[0.1cm]
& \hspace*{0.5cm} + \tfrac12 \Bigl(1-\tfrac1n\Bigr) \int_0^\infty
\int_0^\infty \frac{\mathrm{d}u \, \mathrm{d}v}{u v} \, \biggl[ e^{-(u+v)}
-e^{-v} g_n\Bigl(\frac{u}{n}, 0\Bigr) - e^{-u} g_n\Bigl(0, \frac{v}{n}\Bigr) +
g_n\Bigl(\frac{u}{n}, \frac{v}{n}\Bigr) \biggr]
\nonumber \\[0.1cm]
& \hspace*{0.5cm} - \left\{ \int_0^\infty \frac{\mathrm{d}u}{u} \left[e^{-u} \,
\Bigl(\frac{1+e^{-u}}{2}\Bigr)^{n-1}-e^{-un}\right] \right\}^2,
\label{int2 BSS}
\end{align}
where
\begin{align}
& f_n(s) \dfn e^{-s} \, \biggl(\frac{1+e^{-s}}{2}\biggr)^{n-2} \,
\frac{1+ne^{-s}}{n+1}, \\[0.1cm]
& g_n(s,t) = e^{-s-t} \, \biggl(\frac{e^{-s}+e^{-t}}{2}\biggr)^{n-2}.
\end{align}

\begin{figure}[h!t!b!]
\centering
\vspace*{-4.5cm}
\includegraphics[width=11.5cm]{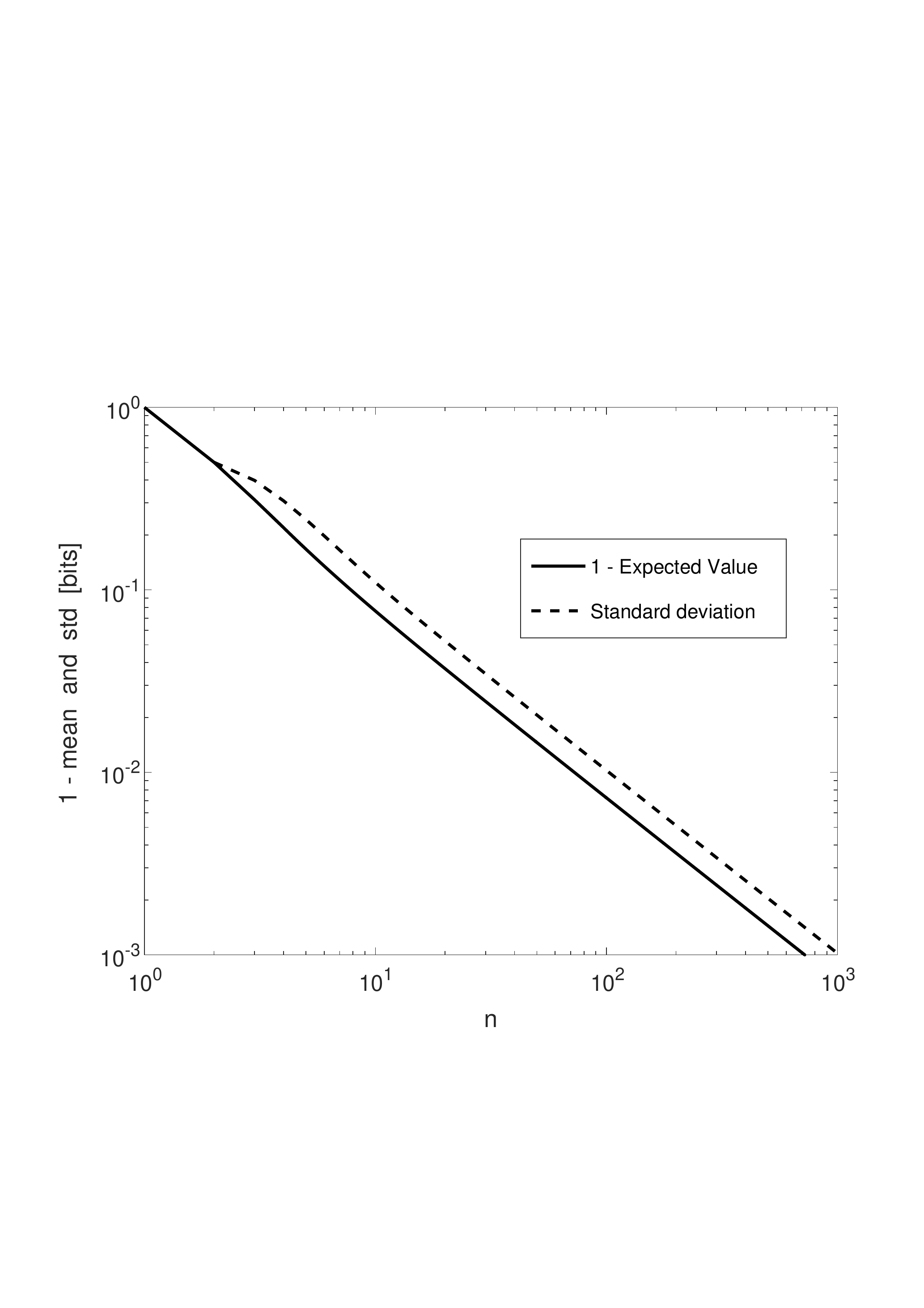}
\vspace*{-3.8cm}
\caption{$1-\bE\{\hH\}$ and $\mathrm{std}(\hH)$ for a BSS (in bits per source
symbol) as a function of $n$.}
\label{expempent}
\end{figure}
For the BSS,
$\ln 2 -\bE\{\hH\} = \bE\{D(\hP \| P)\}$ and the standard deviation of
$\hH$ both decay at the rate of $\frac1{n}$ as $n$ grows without bound,
according to Fig.~\ref{expempent}.
This asymptotic behavior of $\bE\{D(\hP\|P)\}$ is supported by the
well-known result \cite{Wald43} (see also \cite[Section~3.C]{ClarkeB90}
and references therein) that for the class of discrete memoryless
sources $\{P\}$ with a given finite alphabet $\calX$,
\begin{align} \label{Wald}
& \ln \frac{\hat{P}(X_1,\ldots,X_n)}{P(X_1,\ldots,X_n)}
\to \tfrac12 \, \chi_d^2,
\end{align}
in law, where $\chi_d^2$ is a chi-squared random variable with $d$ degrees of
freedom. The left--hand side of \eqref{Wald} can be rewritten as
\begin{align} \label{Wald2}
\ln\left(\frac{\exp\{-n\hH\}}{\exp\{-n\hH-nD(\hat{P}\|
P)\}}\right) = nD(\hat{P}\|P),
\end{align}
and so, $\bE\{D(\hP \| P)\}$ decays like
$\frac{d}{2n}$,
which is equal to $\frac1{2n}$ for the BSS. In Fig.~\ref{expempent},
the base of the logarithm is~2, and therefore,
$\bE\{D(\hP\|P)\} = 1 - \bE\{\hH\}$ decays like $\frac{\log_2
\mathrm{e}}{2n} \approx \frac{0.7213}{n}$.
It can be verified numerically that $1 - \bE\{\hH\}$ (in bits) is equal
to $7.25 \cdot 10^{-3}$ and $7.217 \cdot 10^{-4}$ for $n=100$ and $n=1000$,
respectively (see Fig.~\ref{expempent}), which confirms \eqref{Wald}
and \eqref{Wald2}. Furthermore, the exact result here for the standard
deviation, which decays like $\frac{1}{n}$, scales similarly to the concentration
inequality in \cite[(9)]{Mardia19}.

We conclude this subsection by exploring a quantity related to the empirical entropy,
which is the expected code length associated with the universal lossless source code due
to Krichevsky and Trofimov \cite{KT81}. In a nutshell, this is a predictive universal
code, which at each time instant $t$, sequentially assigns probabilities to the next
symbol according to (a biased version of) the empirical distribution pertaining to the
data seen thus far, $x_1,\ldots,x_t$. Specifically, consider the code--length function
(in nats),
\begin{align}  \label{KT81a}
L(x^n) = -\sum_{t=0}^{n-1}\ln Q(x_{t+1}|x^t),
\end{align}
where
\begin{align}  \label{KT81b}
Q(x_{t+1}=x|x_1,\ldots,x_t) = \frac{N_t(x)+s}{t+s|\calX|},
\end{align}
$N_t(x)$ is the number of occurrences of the symbol $x\in\calX$ in
$(x_1,\ldots,x_t)$, and $s > 0$ is a fixed bias parameter needed for the
initial coding distribution ($t=0$).

We now calculate the redundancy of this universal code,
\begin{align} \label{redun.}
R_n = \frac{\bE\{L(X^n)\}}{n} - H,
\end{align}
where $H$ is the entropy of the underlying source.
From \eqref{KT81a}, \eqref{KT81b} and \eqref{redun.}, we can represent $R_n$ as follows:
\begin{align}  \label{KT81c}
R_n = \frac{1}{n}\sum_{t=0}^{n-1}\bE\left\{\ln\frac{(t+s|\calX|)P(X_{t+1})}{N_t(X_{t+1})+s}\right\}.
\end{align}
The expectation on the right--hand side of \eqref{KT81c} satisfies
\begin{align}
&\bE\left\{\ln\frac{(t+s|\calX|)P(X_{t+1})}{N_t(X_{t+1})+s}\right\} \nonumber\\[0.15cm]
&=\sum_x P(x)\bE\left\{\ln\frac{(t+s|\calX|)P(x)}{N_t(x)+s}\right\}\nonumber\\[0.1cm]
&=\int_0^\infty\left[e^{-us}\sum_x P(x)\bE\{e^{-uN_t(x)}\}-
\sum_x P(x)e^{-u(s|\calX|+t)P(x)}\right] \, \frac{\mathrm{d}u}{u}\nonumber\\[0.15cm]
\label{KT81d}
&=\int_0^\infty\left[e^{-us}\sum_x P(x)[1-P(x)(1-e^{-u})]^t-
\sum_x P(x)e^{-u(s|\calX|+t)P(x)}\right] \, \frac{\mathrm{d}u}{u},
\end{align}
which gives from \eqref{KT81c} and \eqref{KT81d} that the redundancy is given by
\begin{align}
&R_n \nonumber \\
&=\frac{1}{n}\sum_{t=0}^{n-1}\bE\left\{\ln\frac{(t+
s |\calX|)P(X_{t+1})}{N_t(X_{t+1})+s}\right\}\nonumber\\[0.1cm]
&=\frac{1}{n}\int_0^\infty\bigg(e^{-us}\sum_x P(x)
\sum_{t=0}^{n-1}[1-P(x)(1-e^{-u})]^t-\sum_x P(x)e^{-us |\calX| P(x)}
\sum_{t=0}^{n-1}e^{-uP(x)t}\bigg)
\, \frac{\mathrm{d}u}{u}\nonumber\\[0.1cm]
&=\frac{1}{n}\int_0^\infty \Biggl[e^{-us}
\sum_x \frac{1-[1-P(x)(1-e^{-u})]^n}{1-e^{-u}}-\sum_x \frac{P(x)
\, e^{-us|\calX|P(x)} \bigl(1-e^{-uP(x)n} \bigr)}{1-e^{-uP(x)}}\Biggr]
\, \frac{\mathrm{d}u}{u}\nonumber\\[0.1cm]
&= \frac{1}{n}\int_0^\infty\Biggl[
\frac{e^{-us}\bigl(|\calX|-\underset{x}{\sum}[1-P(x)(1-e^{-u})]^n\bigr)}{1-e^{-u}}-
\sum_x \frac{P(x) \, e^{-us|\calX|P(x)}(1-e^{-uP(x)n})}{1-e^{-uP(x)}}\Biggr]
\, \frac{\mathrm{d}u}{u}.\nonumber
\end{align}

\begin{figure}[h!t!b!]
\centering
\vspace*{-4.5cm}
\includegraphics[width=11cm]{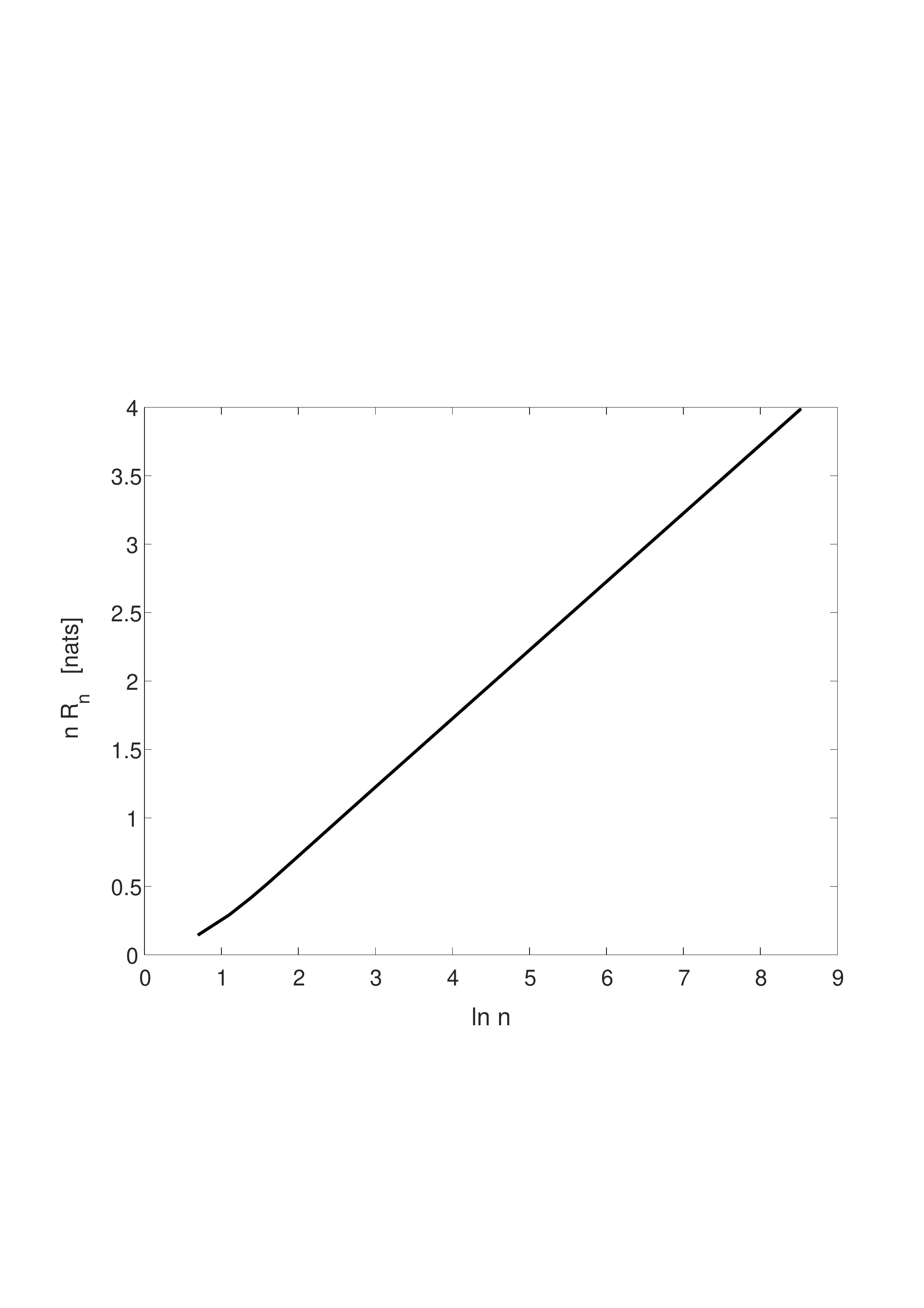}
\vspace*{-3.8cm}
\caption{The function $nR_n$ vs.\ $\ln n$ for the BSS and $s=\tfrac12$, in the
range $2\le n\le 5000$.}
\label{graph1}
\end{figure}

Fig.~\ref{graph1} displays $nR_n$ as a function of $\ln n$ for $s=\tfrac12$
in the range $1\le n\le 5000$. As can be seen, the graph is nearly a straight
line with slope $\tfrac12$, which is in agreement with the theoretical result
that $R_n\sim \frac{\ln n}{2n}$ (in nats per symbol) for large $n$ (see
\cite[Theorem~2]{KT81}).

\section{Summary and Outlook}
\label{outlook}

In this work, we have explored a well--known integral representation of
the logarithmic function, and demonstrated its applications in obtaining
exact formulas for quantities that involve expectations and second order
moments of the logarithm of a positive random variable (or the logarithm
of a sum of i.i.d. such random variables). We anticipate that this integral
representation and its variants can serve as a useful tool also in many
additional applications, as a rigorous alternative to the replica
method in some situations.

Our work in this paper focused on exact results. In future research, it
would be interesting to explore whether the integral representation we
have used is useful also in obtaining upper and lower bounds on expectations
(and higher order moments) of expressions that involves logarithms of
positive random variables. In particular, could the integrand of \eqref{ir}
be bounded from below and/or above in a non--trivial manner, that would
lead to new interesting bounds? Moreover, it would be even more useful if
the corresponding bounds on the integrand would lend themselves to closed--form
expressions of the resulting definite integrals.

Another route for further research relies on \cite[p.~363, Identity~(3.434.1)]{GR14},
which states that
\begin{align}
\label{eq:GR2}
\int_0^{\infty} \frac{e^{-\nu u} - e^{-\mu u}}{u^{\rho+1}} \, \mathrm{d}u =
\frac{\mu^\rho - \nu^\rho}{\rho} \cdot \Gamma(1-\rho), \qquad \mathrm{Re}(\mu)>0,
\; \mathrm{Re}(\nu)>0, \; \mathrm{Re}(\rho)<1.
\end{align}
Let $\nu:=1$, and $\mu := \overset{n}{\underset{i=1}{\sum}} X_i$ where
$\{X_i\}_{i=1}^n$ are positive i.i.d. random variables. Taking expectations
of both sides of \eqref{eq:GR2} and rearranging terms, gives
\begin{align} \label{rho-th moment}
\bE \biggl\{ \biggr( \sum_{i=1}^n X_i \biggr)^{\rho} \biggr\} = 1 +
\frac{\rho}{\Gamma(1-\rho)} \int_0^{\infty} \frac{e^{-u} - M_X^n(-u)}{u^{\rho+1}}
\; \mathrm{d}u, \qquad \forall \, \rho \in (0,1),
\end{align}
where $X$ is a random variable having the same density as of the $X_i$'s, and
$M_X(u) := \bE\bigl\{e^{uX}\bigr\}$ (for $u \in \reals$) denotes the MGF of $X$.
Since $\ln x = \underset{\rho \to 0}{\lim} \frac{x^\rho-1}{\rho}$ for $x>0$, it
follows that \eqref{rho-th moment} generalizes \eqref{Elnsum} for the logarithmic
expectation. Identity~\eqref{rho-th moment}, for the $\rho$-th moment of a sum
of i.i.d. positive random variables with $\rho \in (0,1)$, may be used in some
information-theoretic contexts rather than invoking Jensen's inequality.

\subsection*{Acknowledgment}
The authors are thankful to Zbigniew Golebiewski and Cihan Tepedelenlio\v{g}lu
for bringing references \cite{Knessl98} and \cite{RajanT15}, respectively, to
their attention.

\end{document}